\title{Generalized Assignment via Submodular Optimization with Reserved Capacity} 
\titlerunning{Generalized Assignment via Submodular Optimization}
\author{Ariel Kulik}{Computer Science Department, Technion, Haifa, Israel}{{kulik@cs.technion.ac.il}}{}{}
\author{Kanthi Sarpatwar}{IBM Research, Yorktown Heights, NY, USA}{{sarpatwa@us.ibm.com}}{}{}
\author{Baruch Schieber}{Department of Computer Science, New Jersey Institute of Technology, Newark, NJ, USA}{{sbar@njit.edu}}{}{}
\author{Hadas Shachnai}{Computer Science Department, Technion, Haifa, Israel}{hadas@cs.technion.ac.il}{}{}
\authorrunning{A.\,Kulik, K.\,Sarpatwar, B.\,Schieber and H.\,Shachnai}
\keywords{Group Generalized Assignment Problem, Submodular Maximization, Knapsack Constraints, Approximation Algorithms}
\newcommand{\negA}{\vspace{-0.05in}}
\newcommand{\mysubsection}[1]{\negA\subsection{#1}\negA}
\DeclareMathOperator*{\argmax}{arg\,max}
\DeclareMathOperator*{\argmin}{arg\,min}
\newcommand{\LP}{\mbox{LP}}
\newcommand{\set}[1]{\{#1\}}
\newcommand{\eps}{\varepsilon}
\newcommand{\remove}[1]{}
\newcommand{\OPT}{\mbox{OPT}}
\newcommand{\cG}{\cal G}
\newcommand{\ccG}{\Omega}
\newcommand{\f}{\phi}
\newcommand{\ft}{\eta}
\newcommand{\g}{\psi}
\newcommand{\ca}{\cal A}
\newcommand{\ggap}{\sf Group GAP}
\newcommand{\gap}{\sf GAP}
\newcommand{\Reals}{\rm I\!R}
\newcommand{\Int}{\mathbb{Z}}
\newcommand{\hsi}{{\hat s}_i}
\newenvironment{dl_proof}[1]{\vspace{\baselineskip}\noindent{\bf Proof of Lemma #1:}}{
	\hspace*{\fill}{\qed}}
\begin{document}

\maketitle

 \begin{abstract}
	
We study a variant of the {\em generalized assignment problem} ({\gap})
with group constraints.
An instance of {\ggap} is a set $I$ of items, partitioned into $L$ groups,
and a set of $m$ uniform (unit-sized) bins.
Each item $i \in I$ has a size $s_i >0$, and a profit
$p_{i,j} \geq 0$ if packed in bin $j$.
A group of items is {\em satisfied} if all of its items are packed.
The goal is to find a feasible packing of a subset of the items in the bins
such that the total profit from satisfied groups is maximized.
We point to central applications of {\ggap} in Video-on-Demand services, 
mobile Device-to-Device network caching and base station cooperation in
5G networks.

Our main result is a $\frac{1}{6}$-approximation algorithm for {\ggap} instances
where the total size of each group is at most $\frac{m}{2}$.
At the heart of our algorithm lies 
an interesting derivation of a submodular
function from the classic LP formulation of
{\sf GAP}, which facilitates the construction of
a high profit solution utilizing
at most half the total bin capacity, while the other half is 
{\em reserved} for later use.
In particular, we give an algorithm for submodular maximization
subject to a knapsack constraint,
which finds a solution of profit at least $\frac{1}{3}$ of the optimum, using
at most half the knapsack capacity, under mild restrictions on element sizes.
Our novel approach of submodular optimization 
subject to a knapsack {\em with reserved capacity} constraint
may find applications in solving other group assignment problems.

\end{abstract} 
\clearpage
\pagenumbering{arabic}
\section{Introduction}
\label{sec:intro}

With the rapid adoption of cloud computing, wireless networks, and other modern platforms, resource
allocation problems of various flavors have regained importance. One
classic
example is the {\em generalized assignment problem} ({\gap}).
We are given a set of $n$ items and $m$ bins, $[m]=\{1, 2, \ldots , m\}$. Each item
 $i\in [n]$ has a size $s_{i,j} > 0$ and a profit $p_{i,j} \geq 0$ when packed into bin $j \in [m]$.
 The goal is to feasibly pack in the bins a subset of the items
of maximum total profit.
{\sf GAP} has been widely studied, with applications ranging from
grouping and loading for manufacturing systems to  land use optimization in
regional planning
(see, e.g., \cite{B76,CH99}). In discrete optimization, {\sf GAP}
has received considerable attention
also as  a special
case of the {\em separable assignment} problem  and
 {\em submodular maximization}~(see, e.g., \cite{V08,FGMS11,BTW15,BF16}).
We consider a 
variant of {\gap} with group constraints.
An instance of {\ggap} consists of
 a set $I=\{1,2, \ldots, n\}$ of $n$ items and $m$ uniform (unit-sized) bins $M=\{1, \ldots , m \}$. Each item $i \in I$ has a size $ s_i >0 $ and a profit
$p_{i,j} \geq 0$ when assigned to bin $1 \leq j \leq m$.
The items in $I$ are partitioned into $L \geq 1$ groups,  ${\cG} = \{ G_1, \ldots, G_L \}$.
Given an assignment of items to bins, we say that a group is {\em satisfied} if all of
its items are assigned. 
The goal is to find a feasible assignment of a subset of the items to bins
such that the total profit from satisfied groups is maximized.
Formally, a feasible assignment is a tuple $(U_1, \ldots , U_m)$, such that $U_j \cap U_k = \emptyset$ for all $1 \leq j < k \leq m$,
$U_j \subseteq I$ and $\sum_{i \in U_j} s_i \leq 1$, for all $1 \leq j \leq m$. Let $I(U) = \cup_j U_j$. Then $G_\ell \in {\cG}$ is satisfied if $G_\ell \subseteq I(U)$.
Let ${\cG}_s = \{ G_{\ell_1}, \ldots G_{\ell_t} \}$ be the set of satisfied groups and $I({\cG}_s) = \cup_{G_\ell \in {\cG}_s } G_\ell$.
Then we seek an assignment $(U_1, \ldots , U_m)$ for which $\sum_{j=1}^m \sum_{i \in U_j \cap I({\cG}_s )} p_{i,j}$ is maximized. 

The following scenario suggests a natural application for
{\ggap}. Consider a Video-on-Demand (VoD) service where each  video is given as a collection of segments.
The system has a set of $m$ servers of uniform capacity
distributed over multiple locations. 
To obtain revenue from a video the system must store 
all of its segments (possibly  on different  servers). The revenue from a specific
video also depends on the servers which store  the 
segments. This is  due to the content delivery costs 
resulting from the distance between
the servers  and the predicted location 
of the video audience. The objective of the 
VoD service provider is to select a subset of segments
and an allocation of these segments to servers so as to maximize the total revenue.
In the Appendix we describe central applications of {\ggap} in
mobile Device-to-Device network caching and in base station cooperation in 5G networks.

\mysubsection{Prior Work}
\label{sec:prior_work}

We note that a {\ggap} instance in which each group consists of a {\em single} 
item yields an instance of classic {\gap} where each item takes a single size
across the bins, and all the bins have identical capacities.
{\sf GAP} is known to be APX-hard already in this case, even if there  are only two possible item sizes, and
each item can take one of two possible profits~\cite{ck00}.
Thus, most of the previous research focused on obtaining
efficient approximate solutions.\footnote{Given an algorithm ${\ca}$, let ${\ca}(I),OPT(I)$ denote the
	profit of the solution output by ${\ca}$ and by an optimal solution for a problem instance $I$, respectively.
	For $\rho \in (0,1]$, we say that
	${\ca}$ is a $\rho$-approximation algorithm if, for any instance $I$, $\frac{{\ca}(I)}{OPT(I)} \geq \rho$.}
Fleischer et al.~\cite{FGMS11} obtained a $(1 - e^{-1})$-approximation
for {\sf GAP}, as a special case of the {\em separable assignment problem}.
Feige and Vondr{\'a}k~\cite{FV06} obtained the current best known ratio of $1 - e^{-1}+\eps$, for some absolute constant $\eps >0$.

Chen and Zhang~\cite{CZ16} studied the problem of {\em group packing of items into multiple knapsacks} ({\sf GMKP}), a
 special case of {\ggap} where
the profit of each item is the same across the bins.
Let {\sf GMKP}$(\delta)$ be 
the restriction of {\sf GMPK} to instances in which the total size of items in each group is at most $\delta m$ (that is, a factor $\delta$ of the total capacity of all bins).  
For $\delta >\frac{2}{3}$, the paper~\cite{CZ16} rules out the existence of a constant factor approximation for {\sf GMKP}$(\delta)$, unless {\sf P}= {\sf NP}. 
For $\frac{1}{3}<\delta \leq \frac{2}{3}$, the authors show that there is no 
$\left(\frac{1}{2} +\eps\right)$-approximation for  {\sf GMKP}$(\delta)$, unless {\sf P}= {\sf NP}, and derive a nearly matching $\left(\frac{1}{2}- \eps\right)$-approximation, for any $\eps >0$. 
The paper presents also approximation algorithms and hardness results for other special cases of {\sf GMPK}.

There has been earlier work also on variants of {\ggap} 
with the added constraint that
in any feasible assignment there is at most one item from $G_\ell$ in bin $j$, for any
$1 \leq \ell \leq L$, $1 \leq j \leq m$. Adany et al.~\cite{AF+16} considered this problem, 
called {\em all-or-nothing} {\sf GAP} {\sf (AGAP}). They presented
a $(\frac{1}{19} - \eps)$-approximation algorithm for general instances, and a 
$(\frac{1}{3} -\eps)$-approximation for the special case where the profit of an item is identical across the bins, called the {\em group packing} ({\sf GP}) problem.
Sarpatwar et al.~\cite{SSS18} consider a more general setting for {\sf AGAP}, where each group of items is associated with a time window in which it can be packed.
The paper shows that this variant of the problem, called $\chi$-{\sf AGAP}, admits an $\Omega(1)$-approximation, assuming the time windows are large enough relative
to group sizes. Specifically, for a group $G_\ell$ having a time window of $m$ slots (= $m$ bins), it is assumed that $s(G_\ell) \leq \frac{m}{20}$.

\subsubsection{Submodular Maximization}

Given a finite set $\Omega$, a function $f:2^{\Omega} \rightarrow \Reals$ 
is {\em submodular} if for every $S,T \subseteq \Omega$ we have 
\[f(S) + f(T) \geq f(S \cup T) + f(S\cap T).\]

An equivalent definition of submodularity refers to its diminishing returns:
$$f(S \cup \{ u \}) - f(S) \leq f(T \cup \{ u \} )- f(T),$$ for any $T \subseteq S \subseteq \Omega$, and $u \in \Omega \setminus S.$
A set function $f$ is {\em monotone} if for every $S\subseteq T \subseteq \Omega$
it holds that $f(S) \leq f(T)$. 
Submodular functions arise naturally in a wide variety of optimization problems, ranging from
coverage problems and graph cut problems to {welfare problems}
(see~\cite{BF17} for a survey on submodular functions). 
Submodular optimization under various constraints has been widely studied in the past four decades (see, e.g.,~\cite{S04, CCPV11, FNS11} and~\cite{BF17} and the references therein).

The problem of maximizing a monotone submodular function subject to a knapsack constraint is defined as follows. We are given an oracle to a  monotone, non-negative submodular function 
$f:2^{\Omega} \rightarrow \mathbb{R}$.  Each element $i\in \Omega$  is associated with 
a size $s_i \geq 0$. We are also given a capacity $B > 0$. The objective is to find 
a subset $S\subseteq \Omega$ such that $\sum_{i \in S} s_i \leq B$ and 
$f(S)$ is maximized. The best known result is a $(1 - e^{-1})$-approximation algorithm due to Sviridenko~\cite{S04}. The ratio of $(1-e^{-1})$ cannot be improved even
when $f$ is a coverage function and element sizes are
uniform, unless {\sf P}= {\sf NP}~\cite{Fe98}. A matching lower bound of $(1-e^{-1})$ is known also
for the oracle model with no complexity assumption~\cite{NW78}.

\mysubsection{Contribution and Techniques}
\label{sec:results}

Our main result is a $\frac{1}{6}$-approximation algorithm
for {\ggap} instances
where the total size of each group is at most $\frac{m}{2}$. 
We note that when group sizes can be arbitrary in $(0, m]$, {\ggap} cannot be approximated within any bounded ratio, even if 
item profits are identical across the bins,
and $m=2$, unless {\sf P}= {\sf NP}. Indeed, in this case, deciding whether a single group  of items of total size $2$ and total profit $1$ can be packed in the bins yields an instance of
{\sc Partition}, which is NP-complete~\cite{GJ79}. 
Furthermore, even if group sizes are restricted to be no greater than $\delta m$, for some $\delta> \frac{2}{3}$, then {\ggap} still cannot be approximated within a constant factor,
as it generalizes {\sf GMKP}$(\delta)$, for which the paper~\cite{CZ16} shows hardness of approximation.
Similarly, as we consider in this paper a generalization of {\sf GMKP}$\left(\frac{1}{2}\right)$, it follows from~\cite{CZ16} that our problem cannot be approximated within ratio
better than $\frac{1}{2}$.

In solving {\ggap} we combine
the framework of Adany et al.~\cite{AF+16} with the rounding technique
of Shmoys and Tardos~\cite{ST93}. The framework of~\cite{AF+16} uses
submodular maximization to select a collection of groups for the solution.
It then finds a feasible assignment for the selected groups.

At the heart of our algorithm lies 
an interesting derivation of a submodular
function from the classic LP formulation of
{\sf GAP}, which facilitates the construction of
a high profit solution utilizing
at most half the total bin capacity.
In particular, we give an algorithm for submodular maximization
subject to a knapsack constraint,
which finds a solution occupying
at most half the knapsack capacity,
while the other half is {\em reserved} for later use.\footnote{We assume throughout the discussion that
the size of each element is at most half the knapsack capacity.}
We show that this algorithm achieves an approximation ratio of $\frac{1}{3}$ relative to an optimal solution that may use the whole knapsack capacity.
We note that this ratio is tight. 
Indeed, it is easy to construct an instance
for which the best solution with half the knapsack capacity has only $\frac{1}{3}$ the profit of the optimal solution with full knapsack capacity. 
We also note that a naive application of the algorithm of Sviridenko~\cite{S04} with half the
knapsack capacity will only guarantee a
$\frac{1- e^{-1}}{3}\approx \frac{1}{4.7}$-approximation.

To obtain an integral solution, given a fractional assignment of the selected groups,
we apply the rounding technique of Shmoys and Tardos~\cite{ST93},
followed by a filling phase.
We show that if the total size of the items in the selected groups is
at most $\frac{m}{2}$, the rounding procedure yields a {\em feasible} assignment of the selected groups, whose profit is
at least half the value of the submodular function.
Our novel approach of submodular optimization 
subject to a knapsack {\em with reserved capacity} constraint
may find applications in solving other group assignment problems.

\section{Approximation Algorithm}
\label{sec:approx_alg}

In this section we present an approximation algorithm for {\ggap}. We first introduce several definitions and tools that will be used as
building blocks of our algorithm.

\subsection{Basic Definitions and Tools}
\label{sec:tools}
\subsubsection{The Submodular Relaxation}
\label{sub_intro}

For simplicity, we assume that all the numbers are rational.
For a subset of elements $I' \subseteq I$, let $s(I')= \sum_{i\in I'} s_i$ be the total size of the elements in $I'$.
We assume throughout the discussion that every $G_\ell \in {\cG}$
satisfies $s(G_\ell) \leq \frac{m}{2}$.
We define below a function ${\f}: 2^I \rightarrow {\Reals}^+\cup \{0\}$.
Let $x_{i,j} \in \{0,1\}$ be an indicator for the assignment of item $i$ to bin $j$, for $1 \leq i \leq n$, $1 \leq j \leq m$.
The following is a linear program associated with a subset $S \subseteq I$, in which $x_{i,j} \geq 0$ for all $i,j$.

\begin{alignat}{3}
LP(S)\text{:}&  \text{ maximize } &&\sum_{i\in I, j \in M} x_{i,j} \cdot p_{i,j} \nonumber\\
& \text{ subject to: } && \sum_{j \in M} x_{i,j}  \leq 1 && \forall{i\in I}  \label{select} \\
& && \sum_{i \in I} x_{i,j}  \cdot s_i \leq 1 && \forall{j \in M} \label{capacity} \\
& &&x_{i,j}=0 && \forall i \in I \setminus S, j \in M \nonumber \\
& && x_{i,j}\geq 0 && \forall i \in I, j \in M  \nonumber
\end{alignat}

Note that, by the above constraints, all solutions for the LP have the same dimension, regardless of the size of $S$.
We define ${\f}(S)$ as the optimal value of $\LP(S)$.

We denote the profit of a solution $x$ for the linear program by $p \cdot x = \sum_{i \in I, j \in M} x_{i,j} \cdot p_{i,j}$.
In Section~\ref{sec: submod} we prove the next result.

\begin{theorem}
	\label{sub}
	The function $\f$ is submodular.
\end{theorem}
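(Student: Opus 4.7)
My plan is to establish the equivalent inequality $\phi(S\cup T)+\phi(S\cap T)\le \phi(S)+\phi(T)$ via explicit uncrossing of fractional assignments. Let $x^\star$ and $y^\star$ be optimal solutions to $\LP(S\cup T)$ and $\LP(S\cap T)$, respectively. The goal is to construct non-negative vectors $u$ and $v$ with $u+v=x^\star+y^\star$ componentwise, such that $u$ is feasible for $\LP(S)$ and $v$ is feasible for $\LP(T)$. This suffices, since then $p\cdot u+p\cdot v=p\cdot(x^\star+y^\star)=\phi(S\cup T)+\phi(S\cap T)$, while feasibility implies $\phi(S)+\phi(T)\ge p\cdot u+p\cdot v$.

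Part of the decomposition is forced by support considerations. Since $y^\star$ is supported on $S\cap T$, for $i\in S\setminus T$ I must set $u_{i,j}=x^\star_{i,j}$ and $v_{i,j}=0$, because otherwise $v$ would carry mass on an item outside $T$; symmetrically, $u_{i,j}=0$ and $v_{i,j}=x^\star_{i,j}$ for $i\in T\setminus S$. The only remaining freedom is in distributing $x^\star_{i,j}+y^\star_{i,j}$ between $u_{i,j}$ and $v_{i,j}$ for $i\in S\cap T$ and $j\in M$. The hard part is showing that this distribution can be chosen so that the item constraints $\sum_j u_{i,j}\le 1$, $\sum_j v_{i,j}\le 1$ together with the bin constraints $\sum_i s_i u_{i,j}\le 1$, $\sum_i s_i v_{i,j}\le 1$ all hold simultaneously. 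Naive choices such as $u=x^\star|_S$, $v=x^\star|_{T\setminus S}+y^\star$ already respect the item constraints but may overload a bin, since a bin's total load in $x^\star+y^\star$ can be as large as $2$.

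I will recast the search for a valid split as a fractional transportation feasibility problem over the bipartite pairs $(S\cap T)\times M$: variables $u_{i,j}\in[0,\,x^\star_{i,j}+y^\star_{i,j}]$, row sums $\sum_j u_{i,j}$ restricted to a range derived from the item constraints of $u$ and $v$ at item $i$, and weighted column sums $\sum_{i\in S\cap T}s_i u_{i,j}$ restricted to a range derived from the bin constraints of $u$ and $v$ at bin $j$ after subtracting the forced contributions from items in $S\triangle T$. Each individual row range is non-empty because $\sum_j x^\star_{i,j}+\sum_j y^\star_{i,j}\le 2$, and each column range is non-empty because the combined bin load of $x^\star$ and $y^\star$ at any bin is at most $2$. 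Joint feasibility of this transportation LP---with bounded edge capacities and per-row/per-column sum ranges---then follows from a standard max-flow / Hoffman-type circulation argument on the associated flow network, completing the construction and hence the proof.
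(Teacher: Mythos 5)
Your overall strategy --- reducing submodularity to the Minkowski-sum containment ``every $x^\star+y^\star$, with $x^\star$ feasible for $LP(S\cup T)$ and $y^\star$ feasible for $LP(S\cap T)$, splits as $u+v$ with $u$ feasible for $LP(S)$ and $v$ feasible for $LP(T)$'' --- is sound and genuinely different from the paper's route (the paper discretizes the sizes, reformulates $LP(S)$ as the LP relaxation of bipartite matching in an expanded graph, and invokes submodularity of the partial maximum-weight matching function, proved by an alternating-path argument). Your reduction to the splitting claim is correct, and you correctly identify that the only freedom lies over $(S\cap T)\times M$. The gap is in the final step: joint feasibility of your transportation system does \emph{not} follow from non-emptiness of the individual row and column ranges. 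After the change of variables $w_{i,j}=s_i u_{i,j}$ (which you need anyway, since weighted column sums do not directly define a circulation), Hoffman's theorem characterizes feasibility by a family of cut inequalities coupling row lower bounds, column upper bounds and edge capacities over \emph{all} pairs of subsets of rows and columns; interval non-emptiness is only the single-row/single-column instance of that family, and there are transportation systems with cell capacities and non-empty per-row and per-column intervals that are jointly infeasible. So the sentence ``then follows from a standard max-flow / Hoffman-type circulation argument'' is precisely where the mathematical content of the theorem is hiding, and it is not supplied; a direct attempt to verify the cut conditions runs into the same difficulty you flag for the naive split, since bounds such as $\sum_{i\in S\cap T}s_i y^\star_{i,j}+\sum_{i\in S\setminus T}s_i x^\star_{i,j}\le 1$ are false in general and the upper bounds $\min(1-a_j,c_j)$ must be exploited more delicately.

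The decomposition you want does in fact exist, so the plan is rescuable, but the cleanest proof of it is essentially the paper's discretization: writing $s_i=\hat s_i/N$, splitting item $i$ into $\hat s_i$ unit pieces and bin $j$ into $N$ unit slots turns each $LP(\cdot)$ into an (integral) bipartite matching polytope. For integral matchings $M_1$ on $S\cup T$ and $M_2$ on $S\cap T$, the multigraph $M_1\cup M_2$ has maximum degree two and decomposes into alternating paths and even cycles; two-coloring its edges so that colors alternate along each component, with $M_1$-edges at $S\setminus T$ forced to the $u$-color and those at $T\setminus S$ forced to the $v$-color, is always consistent because no alternating path can join $S\setminus T$ to $T\setminus S$ (its first and last edges would both lie in $M_1$, contradicting the parity of the alternation). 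This gives the containment at the vertices of the Minkowski sum, hence by convexity everywhere, and projecting back yields your $u,v$. If you prefer to stay with the flow formulation, you must verify Hoffman's cut condition explicitly; as written, the proof has a hole exactly where the work is.
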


We note that $\f$ is also monotone and non-negative. We use $\f$ to define the {\em group function} $\g: 2^{\cG} \rightarrow {\Reals}^+\cup \{0\}$. For any $G^* \subseteq {\cG}$ let
$I(G^*) = \bigcup_{G_\ell \in G^*} G_\ell$ and
$\g(G^*) = \f(I(G^*) )$. As $\f$ is submodular, monotone and non-negative, it is easy to see
that $\g$ is submodular, monotone and non-negative as well.
We optimize $\g$ subject to a knapsack (budget) constraint, using the next general result.

\begin{theorem} {\bf [Submodular optimization with reserved capacity]}
	\label{thm:halfsvi}
Let ${\ccG}= \{ 1, \ldots, n \}$ be a ground set,
and $m \geq 0$ a knapsack capacity.
Each $i \in {\ccG}$ is associated with non-negative size $s_i \leq \frac{m}{2}$. Let $f: 2^{\ccG} \rightarrow {\Reals}^+\cup \{0\}$ be a  non-negative monotone submodular function,
and $\OPT = \max \set{ f(S)| S\subseteq {\ccG}, \sum_{i\in S} s_i \leq m}$.
Then Algorithm \ref{alg:subds} (in Section \ref{sec:submod_optimize})
 finds in polynomial time\footnote{The explicit representation of a submodular function might be exponential in the size of its ground set. Thus, it is standard practice to assume that the function is accessed via a value oracle.
 Then the number of operations and oracle calls is polynomial in the size of ${\ccG}$ and the maximum length of the representation of $f(S)$.}
 a subset $S\subseteq {\ccG}$ satisfying
$f(S)\geq \frac{\OPT}{3}$ and $\sum_{i\in S} s_i \leq \frac{m}{2}$.
\end{theorem}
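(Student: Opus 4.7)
My plan is to prove Theorem~\ref{thm:halfsvi} by pairing a structural ``3-partition'' lemma, which produces an existence certificate for a solution of value at least $\OPT/3$ using capacity $m/2$, with an algorithmic procedure that finds such a certificate in polynomial time via partial enumeration and a density greedy in the spirit of Sviridenko.

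The first step is a combinatorial lemma: any $A \subseteq \ccG$ with $s(A)\leq m$ and $s_i\leq m/2$ for every $i\in A$ admits a partition $A = A_1\cup A_2\cup A_3$ with $s(A_j)\leq m/2$ for each $j$. I would prove this via first-fit decreasing. When $|A|\leq 3$ the claim is immediate: place each element alone. When $|A|\geq 4$, put the three largest elements in separate bins (each of size $\leq m/2$), and observe that from $4 s_4 \leq s_1+s_2+s_3+s_4\leq s(A)\leq m$ the fourth-largest element, and hence every remaining element, has size at most $m/4$; a first-fit packing of the remainder must succeed because an item $e$ failing to fit in any bin would force each bin's current load above $m/2-s_e\geq m/4$, yielding a total exceeding $3m/4 + s_e > m$ once $s_e\leq m/4$ is taken into account---contradicting $s(A)\leq m$.

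Applied to an optimal set $O^{\star}$ with $f(O^{\star})=\OPT$, the lemma yields a partition $O^{\star}=O_1\cup O_2\cup O_3$ with $s(O_j)\leq m/2$. Submodularity together with non-negativity gives subadditivity of $f$, so
\[
f(O_1)+f(O_2)+f(O_3)\;\geq\;f(O^{\star})\;=\;\OPT,
\]
hence some part $O_{j^{\star}}$ satisfies $f(O_{j^{\star}})\geq \OPT/3$ and $s(O_{j^{\star}})\leq m/2$. This existence statement reduces the theorem to finding such a part efficiently. Algorithm~\ref{alg:subds} does so by partial enumeration: for every $T\subseteq \ccG$ with $|T|\leq 3$ and $s(T)\leq m/2$, it runs density-greedy starting from $T$, adding elements of $\ccG\setminus T$ in order of decreasing marginal density and stopping as soon as no further element fits in the reserved capacity $m/2$; the output is the best set encountered.

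The main obstacle is the analysis, since a direct Sviridenko-style $(1-1/e)$-approximation bound applied to the capacity-$m/2$ subproblem yields only $(1-1/e)\cdot\OPT_{m/2}\leq (1-1/e)\OPT/3<\OPT/3$, falling short of the target. To obtain the sharp $1/3$ factor I would carry out a case analysis on $|O_{j^{\star}}|$: when $|O_{j^{\star}}|\leq 3$, the enumeration already contains $T=O_{j^{\star}}$, so the output captures $f(O_{j^{\star}})\geq \OPT/3$ with no greedy extension needed; when $|O_{j^{\star}}|\geq 4$, the three largest elements of $O_{j^{\star}}$ form a valid enumerated triple (total size $\leq m/2$), and every remaining element of $O_{j^{\star}}$ is small (of size at most $s(O_{j^{\star}})/4\leq m/8$). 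The heart of the technical argument is to sharpen the standard greedy analysis in this second case: exploiting the fact that the residual elements of $O_{j^{\star}}$ collectively fit in the leftover capacity $m/2-s(T)$, a careful exchange argument charges the greedy's choices against the residual elements of $O_{j^{\star}}$ and bypasses the usual $(1-1/e)$ loss, yielding $f(G_T)\geq \OPT/3$.
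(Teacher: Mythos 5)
Your opening combinatorial step is fine: the first-fit-decreasing argument does show that any $A$ with $s(A)\leq m$ and $s_i\leq m/2$ splits into three parts of size at most $m/2$ each, and subadditivity then certifies the \emph{existence} of a feasible set of value at least $\OPT/3$ (the paper's Case~1 uses essentially this for $|S|<6$). The gap is in the algorithmic half. Your reduction insists on recovering a specific part $O_{j^{\star}}$ with $f(O_{j^{\star}})\geq \OPT/3$ essentially \emph{losslessly}, and the claimed ``careful exchange argument that bypasses the usual $(1-1/e)$ loss'' is exactly the missing content. In the worst case $s(O_{j^{\star}})$ equals the full reserved budget $m/2$ (only two of the three parts can be that large, but the valuable one may be one of them), so the density greedy seeded with the top three elements of $O_{j^{\star}}$ is running with residual capacity equal to the residual size of its target; this is precisely the regime where Sviridenko's analysis gives $(1-e^{-1})$ and where that factor is tight, both for the algorithm and information-theoretically. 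You would therefore end up with roughly $(1-e^{-1})\cdot\OPT/3\approx\OPT/4.7$ --- the exact ``naive'' bound the paper points out is insufficient. No exchange argument charging greedy picks against residual elements of $O_{j^{\star}}$ can close this, because greedy may spend its entire budget on high-density elements outside $O_{j^{\star}}$ whose union is worth far less than $f(O_{j^{\star}})$.

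The paper avoids this trap by never committing to a one-third piece of the optimum. It enumerates the six elements of largest marginal value in the full budget-$m$ optimum $S$, splits them into $B_1,B_2$ with $s(B_1),s(B_2)\leq m/2$, runs the greedy seeded at $P_k=B_1\cup B_2$ \emph{twice} --- once with residual capacity $m/2-s(B_1)$ and once with $m/2-s(B_2)$ --- and compares each run against all of $S\setminus P_k$ via Lemma~\ref{thm:svi}, obtaining factors $1-e^{-(1/2-S_j)/(1-S_1-S_2)}$. The point is that these two exponents are coupled: when one seed is large (so its run has little room), the other seed is small and its run enjoys a factor strictly better than $1-e^{-1}$. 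Lemma~\ref{thm:sub_technical} is the balancing argument showing the better of the two runs always reaches $\OPT/3$, with the top-$6$ enumeration controlling the single-element error term $f_{P_k}(\{i_j\})\leq f(P_k)/6$. This two-run balancing step, together with a separate treatment of the case where the top six elements cannot be split into two half-capacity sets (Case~2.1, which uses Lemma~\ref{thm:lin}), is the substance of the proof and has no counterpart in your proposal.
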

The proof of Theorem \ref{thm:halfsvi} is given in Section~\ref{sec:submod_optimize}.
\subsubsection{Solution Types}

Our algorithm uses a few types of intermediate solutions for {\ggap}, as defined below.
Given $G^* \subseteq {\cG}$, we say that a solution $x$ for $LP(I(G^*))$ is a {\em fractional solution}.
Let $U=(U_1,\ldots, U_m)$ be an assignment of elements to bins, where $U_j$ is the set of elements assigned to bin
$j$. Then $I(U) = \bigcup_{j=1}^{m} U_j$ is the subset of elements packed in the bins.
We say that $U$ is {\em feasible} if for each bin $1 \leq j \leq m$ we have $s(U_j) \leq 1$.
We say that $U$ is {\em almost feasible} if for each bin $1 \leq j \leq m$ there is an element $u_j^*$ such
that $s(U_j \setminus \set{u_j^*}) \leq 1$. We also define the profit of an assignment as
$p(U) = \sum_{1 \leq j \leq m} \sum_{i \in U_j} p_{i,j}$.

Our algorithm first obtains a fractional solution, which is then converted to an almost feasible solution. Finally, the algorithm converts this solution to a feasible one.
We now state the results used
in these conversion steps.

\begin{theorem}
	\label{thm:shmoys_trados}
	Given $G^* \subseteq {\cG}$, such that $s(I(G^*))\leq m$,  and a fractional solution $x$ for $LP(I(G^*))$, it is possible to construct in
	polynomial time an almost feasible assignment $U$ such that $p(U) \geq p \cdot x$, and
	$I(U) = I(G^*)$.
\end{theorem}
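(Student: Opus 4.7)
The plan is to apply the bipartite $b$-matching rounding of Shmoys and Tardos, suitably adapted so that every item in $I(G^*)$ is assigned. I would first augment the fractional solution $x$ to a solution $x'$ for $LP(I(G^*))$ with $\sum_{j \in M} x'_{i,j} = 1$ for every $i \in I(G^*)$. This augmentation is feasible because the total residual item mass $\sum_i s_i(1 - \sum_j x_{i,j})$ never exceeds the residual bin capacity $m - \sum_{i,j} x_{i,j} s_i$, a consequence of the hypothesis $s(I(G^*)) \leq m$, and can be carried out by a simple water-filling pass. Since all profits are non-negative, $p \cdot x' \geq p \cdot x$.

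Next, I would build a bipartite graph $H$ whose left vertices are the items of $I(G^*)$ and whose right vertices are ``virtual slots'' attached to bins. For each bin $j$, order the items $i$ with $x'_{i,j} > 0$ by non-increasing size $s_i$ and lay their $x'$-weights end-to-end on the interval $[0, \sum_i x'_{i,j}]$. Cutting this interval at every positive integer produces $K_j = \lceil \sum_i x'_{i,j}\rceil$ slots $(j,1),\ldots,(j,K_j)$. Set $y_{i,(j,k)}$ equal to the length of the intersection of item $i$'s sub-interval with $[k-1,k]$, and give edge $(i,(j,k))$ profit $p_{i,j}$. By construction, $\sum_{j,k} y_{i,(j,k)} = \sum_j x'_{i,j} = 1$ for every item and $\sum_i y_{i,(j,k)} \leq 1$ for every slot, so $y$ is a feasible fractional perfect matching on the item side with objective value exactly $p \cdot x'$.

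By integrality of the bipartite $b$-matching polytope with $b \equiv 1$, there is an integer perfect matching $\bar y$ of profit at least $p\cdot x' \geq p\cdot x$. Defining $U_j = \{i : \bar y_{i,(j,k)} = 1 \text{ for some } k\}$ yields an assignment with $I(U) = I(G^*)$ and $p(U) \geq p \cdot x$. It remains to verify almost feasibility. For each bin $j$ and each slot index $k \in \{2,\ldots,K_j\}$, the item $\bar u_{j,k}$ placed in slot $(j,k)$ has size at most the smallest size of any item in slot $(j,k-1)$, because items were sorted by non-increasing $s_i$ and items appearing in slot $k$ either straddle into slot $k-1$ or are preceded by items in slot $k-1$ of at least equal size. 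This minimum is in turn at most the LP-weighted mean $\sum_i y_{i,(j,k-1)} s_i$, since slots $1,\ldots,K_j-1$ are ``full'' ($\sum_i y_{i,(j,k-1)} = 1$). Summing telescopes to $\sum_{k \geq 2} s_{\bar u_{j,k}} \leq \sum_i x'_{i,j} s_i \leq 1$, so selecting $u_j^*$ to be the item in slot $(j,1)$ (or any item, if that slot is unmatched or $U_j = \emptyset$) gives $s(U_j \setminus \{u_j^*\}) \leq 1$.

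The step I would expect to be the main obstacle is reconciling two opposing structural needs: the almost-feasibility bound forces the items in each bin to be sorted by \emph{decreasing} size, whereas preserving the full LP profit $p \cdot x$ demands a profit-sensitive rounding. The bipartite $b$-matching LP is the right object precisely because its extreme points are integral and already optimal, so once the size-based slotting is fixed the profit-based rounding can be done on top without any trade-off between the two concerns. Polynomial runtime is immediate: augmentation, sorting, graph construction, and maximum-weight bipartite matching are all polynomial.
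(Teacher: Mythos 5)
Your proposal is correct and follows essentially the same route as the paper's proof: augment $x$ to a fully-assigned fractional solution, build the Shmoys--Tardos slot graph by sorting each bin's items in non-increasing size and cutting the stacked $x'$-mass at integer points, extract an integral item-perfect matching of no smaller weight, and bound the size of each bin minus its first-slot item by the telescoping inequality $s_{j,k}^{\max}\leq s_{j,k-1}^{\min}\leq \sum_i y_{i,(j,k-1)}s_i$, summing to $\sum_i x'_{i,j}s_i\leq 1$. This matches the paper's Algorithm for the bipartite construction and its chain of inequalities essentially step for step.
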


The theorem easily follows by applying a rounding technique of~\cite{ST93} to a fractional solution
in which every element in $I(G^*)$ is fully assigned (fractionally, in multiple bins). We note that such a solution always exists, since
$s(I(G^*)) \leq m$.
We give the proof in the Appendix.
To convert an almost feasible solution to a feasible one we use the following result (we give the proof in Section \ref{filling_sec}).

\begin{theorem}
	\label{thm:filling}
Let $U=(U_1, \ldots, U_m)$ be an almost feasible assignment
such that $s(I(U)) \leq \frac{m}{2}$, then $U$ can be converted in
polynomial time to a feasible assignment $U'$, with $I(U') = I(U)$
and $p(U') \geq \frac{1}{2} p(U)$.
\end{theorem}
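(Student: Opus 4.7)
The plan is to construct $U'$ through local decisions at overloaded bins, exploiting the slack from $s(I(U)) \le m/2$. First, I would observe that the set of overloaded bins $J_o := \{j : s(U_j) > 1\}$ satisfies $|J_o| < m/2$, since each overloaded bin contributes more than $1$ to a total bounded by $m/2$; consequently $|J_f| := m - |J_o| > m/2 \ge |J_o|$, so feasible bins outnumber overloaded ones.

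For each $j \in J_o$, the almost-feasibility of $U$ yields a decomposition $U_j = A_j \cup \{u_j^*\}$ with $s(A_j), s(u_j^*) \le 1$; for $j \notin J_o$, I set $A_j = U_j$ and treat $u_j^*$ as absent. I would then locally choose, for each $j \in J_o$, whether to retain $A_j$ or $\{u_j^*\}$ in bin $j$, taking the option of higher profit. The elementary bound
\[
\max\!\Bigl(\sum_{i \in A_j} p_{i,j},\; p_{u_j^*, j}\Bigr) \;\ge\; \tfrac{1}{2} \sum_{i \in U_j} p_{i,j},
\]
combined with the full retention of $\sum_{i \in U_j} p_{i,j}$ from each $j \in J_f$, yields an in-place profit of at least $\tfrac{1}{2} p(U)$.

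The main technical obstacle is showing that the complementary ``rejected'' pieces, of total size at most $\sum_{j \in J_o} s(U_j) \le m/2$ and each of size at most $1$, can always be placed into the remaining capacity without destroying the in-place profit. In aggregate, the free capacity across all bins is at least $m - s(I(U)) \ge m/2$, which is more than sufficient. To establish per-bin feasibility, I would use the freedom to choose $u_j^*$ (any element of $U_j$ with $s(u_j^*) \ge s(U_j) - 1$ qualifies) to keep rejected pieces small whenever possible, and then route each rejected piece into a helper bin $\pi(j) \in J_f$ via a bipartite matching (Hall-type) argument underwritten by the global size budget. In the residual cases where a rejected piece of size close to $1$ does not fit directly, a short cascade of displacements among feasible bins propagates slack to accommodate the rejected piece; any incurred profit loss is absorbed by the averaging bound, which has slack proportional to the feasible bins' contribution $\sum_{j \in J_f} \sum_{i \in U_j} p_{i,j}$.

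Polynomial runtime follows since identifying $J_o$, decomposing each $U_j$, evaluating the local options, computing the matching, and executing any cascades are all standard polynomial-time operations on $O(m)$ items and bins.
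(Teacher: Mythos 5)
Your opening moves are sound and match the spirit of the paper's proof: split each overloaded bin $U_j$ into two feasible parts, keep the more profitable part in place, and use $s(I(U))\le m/2$ to argue the leftovers can be re-housed. But the step you yourself flag as ``the main technical obstacle'' is precisely the content of the theorem, and your sketch of it does not go through. The global free capacity $m-s(I(U))\ge m/2$ gives no per-bin guarantee: a rejected piece that is a single item of size close to $1$ (a ``big'' item, $s_i>\tfrac12$) may fail to fit into \emph{any} bin, since every feasible bin can have residual capacity arbitrarily close to $0$ or, even under the average-load bound, only slightly more than $\tfrac12$. So the Hall-type matching argument fails at the outset for big items. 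Your fallback --- a ``cascade of displacements'' among feasible bins --- introduces a new profit leak: items displaced from a feasible bin $j$ to a bin $j'$ may have $p_{i,j'}=0$, and your claim that this loss is ``absorbed by the averaging bound, which has slack proportional to the feasible bins' contribution'' is exactly what needs proof; the factor-$\tfrac12$ target leaves slack of only half of each touched feasible bin's profit, while clearing room for a size-$0.9$ item in a bin of load $0.45$ may require removing items carrying essentially all of that bin's profit. Making this work would require both a counting argument (enough lightly-loaded helper bins exist) and a careful accounting of which bins are cannibalized, neither of which is supplied.

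The paper's proof is organized entirely around avoiding this trap: it classifies bins as full, semi-full, or semi-vacant and items as big ($s_i>\tfrac12$) or small, and designs four resolution steps that \emph{never evict a big item} --- each big item is either retained in its (now feasible) bin or explicitly relocated into a semi-vacant bin where it provably fits. Only small items are ever evicted, and those are trivially re-placeable since the average load stays at most $\tfrac12$. The existence of enough semi-vacant partner bins is established by an averaging argument culminating in the bound $c>2a$ (semi-vacant bins outnumber remaining full bins by more than a factor of two), which is what licenses the final resolution step pairing one full bin with two semi-vacant ones. Each step touches each bin at most once and retains at least half the profit of the bins it touches, which gives the clean factor $\tfrac12$ without any cascade. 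Your proposal would need an analogue of this big-item bookkeeping and the $c>2a$ count to be completed.
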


\subsection{The Algorithm}

Our approximation algorithm for {\ggap} follows easily from
the tools presented in Section \ref{sec:tools}. Initially, we solve the problem of maximizing a submodular function
subject to a knapsack with reserved capacity constraint for the set function $\g$.
Then we solve the linear program and convert the
solution to a feasible assignment.
We give the pseudocode in Algorithm~\ref{alg:gg}.

\begin{algorithm}
	\caption{{\ggap} Algorithm}
	\label{alg:gg}
	\begin{algorithmic}[1]
		\State
		\label{step:submod_knapsack_constraint}
		Solve the
		submodular optimization problem:
		$\max_{ \{S\subseteq {\cG}, \sum_{G_\ell \in S} s(G_\ell) \leq m/2 \}} \g(S)$
		using  Algorithm \ref{alg:subds}. Let $S^*$ be the solution found.
		\State Find a (fractional) solution $x$ for $LP(I(S^*))$ that realizes $\g(S^*)$.
     \State Use Theorem~\ref{thm:shmoys_trados} to convert $x$ to an almost feasible assignment $U$ with $I(U) = I(S^*)$.
		\label{step:round}
		\State  Use Theorem~\ref{thm:filling} to convert $U$ into a feasible solution; return this solution.
		\label{step:feasible}
	\end{algorithmic}
\end{algorithm}

\begin{theorem}
	\label{thm:ggap_appx_ratio}
 Algorithm \ref{alg:gg} is a polynomial time  $\frac{1}{6}$-approximation algorithm for {\ggap}
 when the total size of a single group is bounded by $\frac{m}{2}$; that is, $\forall G_{\ell} \in {\cG}: \sum_{i \in G_{\ell} }s_i \leq \frac{m}{2} $.
\end{theorem}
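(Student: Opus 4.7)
The plan is to chain together the four tools established earlier (Theorems 2, 3, 4 on reserved-capacity submodular optimization, Shmoys--Tardos rounding, and the filling lemma), losing a factor of $3$, then $1$, then $2$ respectively, for a total factor of $\frac{1}{6}$. The core of the argument is showing that the submodular surrogate $\g$ is in fact an upper bound on the optimal {\ggap} profit, so that Theorem~\ref{thm:halfsvi} can be applied meaningfully.

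First I would bound $\OPT$ from above by the submodular relaxation. Let $U^{\OPT}=(U_1^{\OPT},\ldots,U_m^{\OPT})$ be an optimal feasible assignment for the {\ggap} instance, and let $\cG^{\OPT}\subseteq \cG$ be the set of groups it satisfies. Setting $x^{\OPT}_{i,j}=1$ if $i$ is assigned to bin $j$ and $i\in I(\cG^{\OPT})$, and $0$ otherwise, yields a feasible solution of $LP(I(\cG^{\OPT}))$ whose objective value is exactly $\OPT$ (items in unsatisfied groups contribute nothing to $\OPT$ regardless of where $U^{\OPT}$ places them). Hence $\g(\cG^{\OPT})=\f(I(\cG^{\OPT}))\geq \OPT$. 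Moreover $\sum_{G_\ell\in \cG^{\OPT}} s(G_\ell)=s(I(\cG^{\OPT}))\leq m$, since all of these items are packed by $U^{\OPT}$ into the $m$ unit bins. Thus the maximum of $\g$ over subsets of $\cG$ of total group size at most $m$ is at least $\OPT$.

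Next I would invoke Theorem~\ref{thm:halfsvi} on the ground set $\ccG=\cG$ with capacity $m$ and element sizes $s(G_\ell)$. The hypothesis $s(G_\ell)\leq m/2$ for every group is exactly the restriction of the theorem, and $\g$ is monotone, non-negative, and submodular. So Step~\ref{step:submod_knapsack_constraint} produces in polynomial time a set $S^*\subseteq \cG$ with $\sum_{G_\ell\in S^*} s(G_\ell)\leq m/2$ and $\g(S^*)\geq \OPT/3$. Solving $LP(I(S^*))$ gives a fractional $x$ with $p\cdot x=\g(S^*)$. Because $s(I(S^*))\leq m/2\leq m$, Theorem~\ref{thm:shmoys_trados} converts $x$ to an almost feasible assignment $U$ with $I(U)=I(S^*)$ and $p(U)\geq p\cdot x=\g(S^*)\geq \OPT/3$. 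Since $s(I(U))=s(I(S^*))\leq m/2$, Theorem~\ref{thm:filling} then converts $U$ into a feasible assignment $U'$ with $I(U')=I(U)=I(S^*)$ and $p(U')\geq p(U)/2\geq \OPT/6$.

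Finally I would translate $p(U')$ into satisfied-group profit. Because $I(U')=I(S^*)=\bigcup_{G_\ell\in S^*} G_\ell$, every group in $S^*$ is fully packed in $U'$, so every group in $S^*$ is satisfied. Therefore the {\ggap} profit of $U'$ is at least $\sum_{j}\sum_{i\in U'_j} p_{i,j}=p(U')\geq \OPT/6$, proving the approximation ratio. Polynomial running time is immediate since every step invoked (Theorem~\ref{thm:halfsvi}, LP solving, Theorem~\ref{thm:shmoys_trados}, Theorem~\ref{thm:filling}) runs in polynomial time. The only step that requires a small amount of care is the first one; the remaining arguments are routine bookkeeping, checking that the size assumptions needed by each subsequent theorem are preserved through the pipeline.
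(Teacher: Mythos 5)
Your proposal is correct and follows essentially the same route as the paper: apply Theorem~\ref{thm:halfsvi} to $\g$, then chain Theorems~\ref{thm:shmoys_trados} and~\ref{thm:filling}, losing factors of $3$ and $2$. The only difference is that you spell out two steps the paper leaves implicit --- that an optimal {\ggap} assignment induces a feasible solution of $LP(I({\cG}^{\OPT}))$ of value $\OPT$ with $s(I({\cG}^{\OPT}))\leq m$ (so the benchmark in Theorem~\ref{thm:halfsvi} is indeed at least $\OPT$), and that $I(U')=I(S^*)$ guarantees every selected group is satisfied --- which is a welcome clarification rather than a deviation.
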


\begin{proof}

It is easy to see that the algorithm runs in polynomial time.
By Theorem \ref{thm:halfsvi}, we have that $\g(S^*)\geq \OPT /3$, where $\OPT$ is the
value of the optimal solution for the original instance.

By Theorems \ref{thm:shmoys_trados} and \ref{thm:filling}, we are guaranteed to find in Steps~\ref{step:round}$-$\ref{step:feasible} a feasible assignment $U$ of all elements in $I(S^*)$,
such that $p(U) \geq \frac{1}{2} \g(S^*) \geq \frac{1}{6} \OPT$.
\end{proof}

\section{Submodularity}
\label{sec: submod}

In this section we show that the function $\f$ is submodular. Our proof builds on
the useful relation of our problem to maximum weight bipartite matching.
Let $G=(A \cup B,E)$ be a bipartite (edge) weighted graph, where $|B| \geq |A|$. Assume that the graph is complete (by adding zero weight edges if needed). For $e \in E$, let $W(e)$ be the weight of edge $e$, and for $F\subseteq E$,
let $W(F)= \sum_{e\in F} W(e)$ be the total weight of edges in $F$.
For $S\subseteq A$, define $h(S)$ to be the value of the maximum weight matching in $G[S \cup B]$, the graph induced by $S\cup B$. We call $h$ the {\em partial maximum
weight matching function}. The next result was shown by Bar-Noy and Rabanca~\cite{BR16}.

\begin{theorem}
\label{thm:submod_bp_matching}
If the edge weights are non-negative then the function $h$ is (monotone) submodular.
\end{theorem}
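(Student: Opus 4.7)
The plan is to prove both properties by exhibiting explicit matchings. Monotonicity is immediate: for $S \subseteq T \subseteq A$, every matching of $G[S \cup B]$ is also a matching of $G[T \cup B]$, and because all edge weights are non-negative the maximum weight matching can only increase, so $h(S) \leq h(T)$.

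For submodularity I would prove the equivalent inequality $h(S_1) + h(S_2) \geq h(S_1 \cup S_2) + h(S_1 \cap S_2)$ for arbitrary $S_1, S_2 \subseteq A$ by an exchange argument. Fix a maximum weight matching $M^{\cup}$ in $G[(S_1 \cup S_2) \cup B]$ and a maximum weight matching $M^{\cap}$ in $G[(S_1 \cap S_2) \cup B]$. From the multiset union $H = M^{\cup} \cup M^{\cap}$ I will construct a partition into edge sets $N_1, N_2$ such that $N_1$ is a matching in $G[S_1 \cup B]$, $N_2$ is a matching in $G[S_2 \cup B]$, and $W(N_1) + W(N_2) = W(M^{\cup}) + W(M^{\cap})$. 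Since $h(S_1) \geq W(N_1)$ and $h(S_2) \geq W(N_2)$, the desired inequality follows.

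The construction relies on the standard observation that every vertex in $H$ has degree at most $2$, so $H$ decomposes into simple paths and cycles; moreover, at every degree-$2$ vertex the two incident edges must come from different matchings, so along each component the edges alternate between $M^{\cup}$ and $M^{\cap}$. Each edge of $M^{\cap}$ has its $A$-endpoint in $S_1 \cap S_2$ and is eligible for either $N_1$ or $N_2$. An edge of $M^{\cup}$ whose $A$-endpoint lies in $S_1 \setminus S_2$ is eligible only for $N_1$, and symmetrically for $S_2 \setminus S_1$ and $N_2$; all other $M^{\cup}$-edges are unconstrained. Within each component I choose one of the two proper edge $2$-colorings (each color class is automatically a matching since consecutive edges share a vertex) so as to respect every endpoint constraint.

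The main obstacle is to verify that a consistent $2$-coloring always exists, which I plan to handle by a short case analysis that leans on the bipartiteness of $G$. In every cycle all $A$-vertices have degree $2$ and therefore lie in both matchings, i.e.\ in $S_1 \cap S_2$, so no constraint is active and either coloring works. In a path, any endpoint in $A \setminus (S_1 \cap S_2)$ has its unique incident edge in $M^{\cup}$. The seemingly problematic case of a path with one endpoint in $S_1 \setminus S_2$ and the other in $S_2 \setminus S_1$ is ruled out as follows: both $A$-endpoints would force the two terminal edges into $M^{\cup}$, but since the path has two $A$-endpoints in a bipartite graph it has even length, while the alternation between $M^{\cup}$ and $M^{\cap}$ then forces its first and last edges into different matchings --- a contradiction. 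A parallel check disposes of the remaining configurations (paths with one constrained endpoint and one free endpoint, or with two endpoints in $S_1 \cap S_2$), completing the proof.
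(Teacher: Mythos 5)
Your proof is correct, but it takes a genuinely different route from the paper's. The paper proves the diminishing-returns form $h(S\cup\{v\})-h(S)\ge h(T\cup\{v\})-h(T)$ for nested $S\subseteq T$ by extracting a \emph{single} alternating sequence $x_1=v,x_2,\dots,x_k$ between the fixed optimal matchings $M_S$ and $M_{T'}$ (ending at a $B$-vertex unmatched in $M_S$), and then swapping that one path in and out to bound both marginal gains by the same quantity $W(E_2)-W(E_1)$. You instead prove the lattice inequality $h(S_1)+h(S_2)\ge h(S_1\cup S_2)+h(S_1\cap S_2)$ for arbitrary pairs by the classical decomposition of $M^{\cup}\uplus M^{\cap}$ into alternating paths and even cycles, followed by a constrained $2$-coloring; the crucial step --- that a constrained $M^{\cup}$-edge must be a terminal edge of a path (its $A$-endpoint is unmatched in $M^{\cap}$, hence has degree one), so each path carries at most one active constraint because two constrained $A$-endpoints would force both terminal edges of an even-length alternating path into $M^{\cup}$ --- is exactly right, and weight accounting is global ($W(N_1)+W(N_2)=W(M^{\cup})+W(M^{\cap})$) rather than local along one path. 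Your approach buys symmetry, no nestedness assumption, and a clean parity argument from bipartiteness; the paper's buys a more surgical identification of where the marginal gain lives, touching only the component containing $v$. One cosmetic point: edges common to both matchings form doubled ($2$-cycle) components rather than simple cycles; it is cleanest to set these aside first, placing one copy in each of $N_1,N_2$ (their $A$-endpoints lie in $S_1\cap S_2$, so this is always legal), before decomposing the rest into simple paths and cycles.
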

We give a simpler proof in the Appendix.
We are now ready to prove our main result.

\noindent
{\bf Proof of Theorem~\ref{sub}:}
We first note that since all numbers are rational, for some $N \in {\Int}^+$, we can write $s_i=\frac{\hsi}{N}$, where ${\hsi} \in {\Int}^+$ for all $i \in I$.\footnote{Note that $N$, which may be
	arbitrarily large, is used just for the proof. Our algorithm does not rely on obtaining a solution (or an explicit formulation) for $M(S)$.}

Now, set the capacity of each bin $1 \leq j \leq m$ to be $b_j=N$, and let $0  \leq y_{i,j} \leq {\hsi}$ indicate the size of item $i$ assigned to bin $j$.
For a subset of items $S \subseteq I$, we now write the following linear program.

\begin{alignat}{3}
M(S)\text{:}&  \text{ maximize } &&\sum_{i\in I} \frac{1}{\hsi} \sum_{j \in M} y_{i,j} \cdot p_{i,j} \nonumber\\
& \text{ subject to: } && \sum_{j \in M} y_{i,j}  \leq {\hsi} && \forall{i\in I}  \label{itemsize_bound} \\
& && \sum_{i \in I} y_{i,j}  \leq N && \forall{j \in M} \label{bincapacity_bound} \\
& && y_{i,j}=0 && \forall i \in I \setminus S, j \in M \nonumber \\
& && y_{i,j} \geq 0 && \forall i \in I, j \in M  \nonumber
\end{alignat}
Indeed, Constraint (\ref{itemsize_bound}) ensures that the total size assigned for item $i$ over the bins is upper bounded by ${\hsi}$, and Constraint (\ref{bincapacity_bound})
guarantees that the capacity constraint is satisfied for all the bins $j \in M$.
Given a subset of elements $S \subseteq I$,
let ${\ft}(S)$ be the value of an optimal solution for $M(S)$.

Now, observe that any feasible solution for $LP(S)$ induces a feasible solution for $M(S)$ of the same value, by setting
$y_{i,j}= x_{i,j} \cdot {\hsi}$ for all $i \in I$ and $j \in M$. Similarly, a feasible solution for $M(S)$ induces a feasible solution for $LP(S)$ of the same value.
Hence, $\f(S) = {\ft}(S)$ for all $S \subseteq I$.

By the above discussion, to prove the theorem it suffices to show that ${\ft}$
 is submodular. Given our {\ggap} instance, we construct the following bipartite graph $G$. For each item $i \in I$, we define ${\hsi}$ vertices,
 $V_i = \{ v_{i,1}, \ldots , v_{i, {\hsi}} \}$. For each bin $j \in M$, we define $N$ vertices $U_j = \{ u_{j,1}, \ldots , u_{j,N} \}$. For any $1 \leq i \leq n$ and $1 \leq j \leq m$, there are
 edges $(v_{i,s}, u_{j,r})$ of weight $p_{i,j}/{\hsi}$, for all $1 \leq s \leq {\hsi}$, $1 \leq r \leq N$. Let $V_I = \cup_{i \in I} V_i$, $U_M = \cup_{j \in M} U_j$, and
 let $E$ be the set of edges.
 Consider the bipartite graph $G=(V_I \cup U_M, E)$. W.l.o.g we may assume that $|U_M| \geq |V_I|$; otherwise, we can add new bins $j=m+1, m+2, \ldots$ with corresponding sets of vertices $U_j = \{ u_{j,1}, \ldots , u_{j,N} \}$ and zero weight edges $(v_{i,s}, u_{j,r})$ for all $1 \leq i \leq n$, $1 \leq s \leq {\hsi}$, $1 \leq r \leq N$.

 We note that, given a subset of items $S \subseteq I$, $M(S)$ is the linear programming relaxation of the problem of finding a maximum weight matching in the
 subgraph $G[V_S \cup U_M]$,
 where $V_S \subseteq V_I$ is the subset of vertices in $G$ that corresponds to $S$.
 Using standard techniques (see, e.g.,~\cite{LRS11}), it can be shown that $M(S)$ has an optimal integral solution.
 Hence, ${\ft}(S)=h(V_S)$, where $h: 2^{V_I} \rightarrow {\Reals}^+$ is a partial maximum weight matching function in $G$.
 By Theorem~\ref{thm:submod_bp_matching}, $h$ is (monotone) submodular.
 Hence, ${\ft}$ is also (monotone) submodular. 	\hfill \qed

\newcommand{\km}{5}
\newcommand{\kv}{6}
\section{Submodular Optimization with Reserved Capacity}
\label{sec:submod_optimize}

In this section we prove Theorem~\ref{thm:halfsvi}. We start with some definitions and notation.
Assume we are given a ground set
${\ccG}  = \{ 1, \ldots, n \}$ and capacity $m > 0$, where  each element $i \in {\ccG}$ is associated with a non-negative size $s_i \leq \frac{m}{2}$.
For $S \subseteq {\ccG}$, let $s(S) = \sum_{i\in S} s_i$.
Also, for $S,T \subseteq {\ccG}$ let $f_S(T) = f(S\cup T) - f(S)$. We use throughout this section basic properties of
monotone submodular functions (see, e.g.,~\cite{BF17}).
\negA
\begin{algorithm}
	\caption{\sc SubmodularOpt} 
	\label{alg:subds}

	\hspace*{\algorithmicindent} \noindent \textbf{Input:} A monotone submodular function $f:2^{\ccG}\rightarrow \mathbb{R}_+$,
	sizes $s_i \geq 0$ for all $i \in {\ccG}$, and capacity $m > 0$.
	
	\hspace*{\algorithmicindent} \noindent \textbf{Output:} A subset of elements $R \subseteq {\ccG}$ such that $s(R) \leq \frac{m}{2}$. 
	

	\begin{algorithmic}[1]	
		
		\Procedure{Greedy}{$g$, $m'$}
		\State Set $S= \emptyset$, $E={\ccG}$.
		\While {$E\setminus S \neq \emptyset$}
		\State
		Find
		$i = \argmax_{i \in E \setminus S}  \frac{g_S(\{i\})}{s_i}$
		\label{greedy:argmax}
		\If {$s(S)  + s_{i} \leq m'$}
		set $S = S \cup \{i\}$.
		\EndIf
		\State Set $E = E \setminus \{i\}$.

		\EndWhile
		\State Return $S$
		\EndProcedure
		
		\State
		Set $R= \emptyset$
		\For {every set  $S_e \subseteq {\ccG}$, 
			$|S_e|\leq \kv$}
		\For {every set $B\subseteq S_e$, $s(B)\leq m/2$}
		\State $T = \Call{Greedy}{f_{S_e}, m/2 - s(B)}$
		\If {$f(B \cup T)\geq f(R)}$
		Set $R = B \cup T$.
		\EndIf
		\EndFor
		\EndFor
		\State Return $R$
	\end{algorithmic}
\end{algorithm}

In the following we give an outline of an algorithm for maximizing a monotone submodular function $f$ subject to 
a knapsack {\em with reserved capacity} constraint.
Specifically, assuming that the knapsack capacity is $m$ for some $m >0$, the algorithm solves the problem $\max_{ \{ S \subseteq {\ccG}: s(S) \leq \frac{m}{2}\} } f(S)$.
The algorithm, {\sc SubmodularOpt}, initially guesses the set of at most
six items of highest profits in some optimal solution (for the problem with knapsack capacity $m$),
and a subset of these profitable items, whose total size is at most $m/2$.
Then the algorithm calls a procedure which applies the Greedy approach as in~\cite{S04} to find the remaining items in the solution. We give a pseudocode of {\sc SubmodularOpt}
in Algorithm~\ref{alg:subds}.  It is important to note that while the algorithm produces
a solution of size at most $m/2$, the analysis compares this solution against an optimal solution of size at most $m$. 

The next lemma, which plays a key role in our analysis, 
follows from the technique  presented in \cite{S04}. 

\begin{lemma}
\label{thm:svi}
Given the knapsack capacity $m > 0$, let $0 < m' \leq m^* \leq m$.
Let $S^* \subseteq {\ccG}$ be a non-empty subset of elements, such that $s(S^*) \leq m^*$. Also, let
$g: 2^{\ccG} \rightarrow \Reals$
be a monotone submodular function satisfying $g(\emptyset)=0$,  and let $S=\Call{Greedy}{g,m'}$.
Then, there is an element $i^*\in S^*$ such that
$g(S)+g(\{i^*\})\geq (1-e^{m'/m^*}) g(S^*)$.
\end{lemma}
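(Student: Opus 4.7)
The plan is to adapt the classical greedy-density analysis of Sviridenko~\cite{S04}. Trace the execution of \textsc{Greedy}$(g,m')$ and let $i_1, i_2, \ldots$ be the elements in the order they are chosen as the $\argmax$, with $S_0, S_1, \ldots$ the successive values of the variable $S$. Define $\ell$ as the \emph{first} iteration at which $i_\ell \in S^*$ but the capacity test fails, i.e.\ $s(S_{\ell-1}) + s_{i_\ell} > m'$, and set $i^* := i_\ell$. If no such $\ell$ exists, then every element of $S^*$ is eventually added, so $S^* \subseteq S$, and monotonicity gives $g(S) \geq g(S^*)$; the claim follows for any $i^* \in S^*$. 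Hence I may assume $\ell$ is well defined.

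The central per-iteration inequality is that for each $t \leq \ell$ at which \textsc{Greedy} appends $i_t$, and also for the hypothetical insertion of $i^*$ at step $\ell$,
\[
g(S_t) - g(S_{t-1}) \;\geq\; \frac{s_{i_t}}{m^*}\bigl(g(S^*) - g(S_{t-1})\bigr).
\]
The reason: by the minimality of $\ell$, every element of $S^* \setminus S_{t-1}$ is still in the pool $E \setminus S_{t-1}$ at step $t$ and therefore loses the density race to $i_t$, giving $s_{i_t}^{-1}(g(S_t) - g(S_{t-1})) \geq s_i^{-1}(g(S_{t-1} \cup \{i\}) - g(S_{t-1}))$. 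Summing over $i \in S^* \setminus S_{t-1}$, lower bounding the right-hand side via submodularity and monotonicity by $g(S^*) - g(S_{t-1})$, and using $s(S^* \setminus S_{t-1}) \leq m^*$ yields the displayed bound. Iterations in which the $\argmax$ is merely discarded contribute trivially because $S_t = S_{t-1}$.

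Rewriting this as $g(S^*) - g(S_t) \leq (1 - s_{i_t}/m^*)(g(S^*) - g(S_{t-1}))$ and chaining across all added iterations up through $\ell-1$, followed by the hypothetical addition of $i^*$, telescopes to
\[
g(S^*) - g(S_{\ell-1} \cup \{i^*\}) \;\leq\; \Biggl(\prod_{i \in S_{\ell-1} \cup \{i^*\}}\!\bigl(1 - s_i/m^*\bigr)\Biggr) g(S^*).
\]
By the standard bound $\prod(1 - x_i) \leq \exp(-\sum x_i)$ together with $s(S_{\ell-1}) + s_{i^*} > m'$, this product is strictly less than $\exp(-m'/m^*)$. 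A final application of submodularity with $g(\emptyset)=0$ gives $g(S_{\ell-1} \cup \{i^*\}) \leq g(S_{\ell-1}) + g(\{i^*\})$, and $S_{\ell-1} \subseteq S$ together with monotonicity yields $g(S) \geq g(S_{\ell-1})$, so $g(S) + g(\{i^*\}) \geq (1 - e^{-m'/m^*}) g(S^*)$ with $i^* \in S^*$, as required.

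The main technical obstacle is verifying the density inequality at each $t \leq \ell$, which rests crucially on the minimality of $\ell$: no $S^*$-element can have been considered and rejected before iteration $\ell$, so the yet-unchosen $S^*$-elements are always available to the $\argmax$. This bookkeeping is what allows the standard Sviridenko-style argument to go through even though \textsc{Greedy} may waste iterations discarding non-$S^*$ elements that fail the capacity test.
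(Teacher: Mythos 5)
Your proof is correct and follows essentially the same route as the paper's (which proves the slightly more general Lemma~\ref{thm:alpha_greedy} and specializes to $T=\emptyset$): a per-iteration density comparison against $S^*$, telescoped into the $1-e^{-m'/m^*}$ bound, with the capacity failure of $i^*$ supplying $s(S_{\ell-1})+s_{i^*}>m'$. The only difference is bookkeeping --- you take $i^*$ to be the first element of $S^*$ rejected by the capacity test, whereas the paper locates $i^*$ via the first index at which the best remaining $S^*$-density overtakes the greedy density --- and you carry out the telescoping explicitly rather than citing inequality (4) of \cite{S04}.
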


In the Appendix we prove a more general result (see Lemma~\ref{thm:alpha_greedy}).\footnote{The statement of Lemma~\ref{thm:svi} is obtained by taking in Lemma~\ref{thm:alpha_greedy} 
$T=\emptyset$.}
In \cite{S04} the above lemma was applied in conjunction with a guessing phase, used to ensure the three most
profitable elements in an optimal solution are selected by the 
algorithm, thus limiting the value of $g(\{i^*\})$
in the special case where $m'=m^*$.

Attempting to apply a similar approach in solving the problem with 
{\em reserved} capacity, several difficulties arise.
The first one is that 
the most profitable elements in {\em any} optimal solution may already
exceed the reduced capacity, and therefore cannot be added to the solution.
Another difficulty is that even if these elements do fit in the smaller knapsack, 
one can easily come up with a scenario 
in which it is better not to include them in the solution.

To overcome these difficulties we use the following main observation.
Given $P_k$, the set of $k=\kv$ most profitable elements 
in an optimal solution\footnote{The value $k=6$ is derived from Lemma \ref{thm:sub_technical}. 
	It may be possible to obtain the same approximation ratio using smaller values of $k$, 
	leading to a more efficient algorithm.}, and  a partition of this set into two subsets $B_1, B_2$, each of size at most $m/2$, adding elements
to either $B_1$ or $B_2$ using the greedy procedure leads
to a solution 
of value at least one third of
the value of an optimal solution. This observation comes into play
in Case 2.2 in the proof of Theorem~\ref{thm:halfsvi}.
The next technical lemma  
is used to prove this observation (we give the proof below).

\begin{lemma}
	\label{thm:sub_technical}
	For $k=\kv$, $p_A, p_B, S_A, S_B \geq 0$ define
	$$h(p_A,p_B,S_A, S_B) = p_A+ (1-p_A-p_B) \left(1-e^{ - \frac{\frac{1}{2} - S_A}{1- S_A - S_B} }\right) - 
	\frac{p_A+p_B} {k}.$$
	Then for $p_1, p_2, S_1, S_2$ such that $0\leq p_1, p_2 \leq \frac{1}{3}$ and $0\leq S_1, S_2 \leq \frac{1}{2}$
	it holds that 
	$$\max\left( h(p_1, p_2, S_1, S_2), h(p_2, p_1, S_2, S_1)\right) \geq \frac{1}{3}.$$
\end{lemma}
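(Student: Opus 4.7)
The plan is to argue by contradiction. Suppose both $h_1 := h(p_1, p_2, S_1, S_2) < 1/3$ and $h_2 := h(p_2, p_1, S_2, S_1) < 1/3$, and aim for an inconsistency. The key structural observation is that, setting $\alpha_i := (1/2 - S_i)/(1 - S_1 - S_2)$, one has the clean identity $\alpha_1 + \alpha_2 = 1$ with $\alpha_1, \alpha_2 \in [0, 1]$, which is what makes the two expressions interact meaningfully.

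Let $u := p_1 + p_2 \in [0, 2/3]$. I would add the two hypotheses: with $k = 6$ (so $2/k = 1/3$), the $p$-terms sum to $u$, the knapsack-penalty terms sum to $-u/3$, and the inequality becomes $2u/3 + (1 - u)[2 - e^{-\alpha_1} - e^{-\alpha_2}] < 2/3$. Dividing by $(1-u) > 0$ produces the clean consequence
\begin{equation*}
e^{-\alpha_1} + e^{-\alpha_2} > \frac{4}{3}.
\end{equation*}
The function $g(\alpha) := e^{-\alpha} + e^{-(1-\alpha)}$ is strictly convex and symmetric about $\alpha = 1/2$, with minimum $g(1/2) = 2/\sqrt{e}$ and maximum $g(0) = 1 + 1/e$. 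Since $2/\sqrt{e} < 4/3 < 1 + 1/e$, there is a unique root $\alpha^* \in (0, 1/2)$ of $g(\alpha) = 4/3$, and the above condition forces $\alpha_1 \leq \alpha^*$ or $\alpha_1 \geq 1 - \alpha^*$. By the symmetry of the statement under $1 \leftrightarrow 2$, I may assume $\alpha_1 \leq \alpha^*$, hence $\alpha_2 \geq 1 - \alpha^*$ and $1 - e^{-\alpha_2} \geq \beta := 1 - e^{\alpha^* - 1}$.

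Substituting this lower bound on $1 - e^{-\alpha_2}$ into $h_2 < 1/3$ gives $p_2 < (1/3 - \beta) + u(\beta + 1/6)$. Combined with $p_2 \geq 0$ this forces $u > (\beta - 1/3)/(\beta + 1/6)$, while combined with $p_1 \leq 1/3$ (so $u - 1/3 \leq p_2$) it forces $u < (2/3 - \beta)/(5/6 - \beta)$. A routine cross-multiplication shows these two bounds are simultaneously satisfiable only when $\beta \leq 7/12$, so the required contradiction reduces to proving $\beta > 7/12$.

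This final numerical step is where I expect the main difficulty. Substituting $y := e^{\alpha^* - 1}$ into $g(\alpha^*) = 4/3$ produces the quadratic $3ey^2 - 4ey + 3 = 0$; the smaller root (the one with $\alpha^* < 1/2$) satisfies $y < 5/12$ if and only if $e > 144/55 \approx 2.618$, which follows from $e > 2.7$. Hence $\beta > 7/12$ and the contradiction is complete. The delicacy is that the numerics are tight: the choice $k = 6$ makes the reduction to $\beta > 7/12$ just barely work, so every intermediate simplification must be carried out precisely, though no individual step is conceptually deep.
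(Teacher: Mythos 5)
Your proposal is correct, and it takes a genuinely different route from the paper's. The paper works with the single variable $r=e^{-\alpha_1}$ (using $e^{-\alpha_1}e^{-\alpha_2}=e^{-1}$, the multiplicative form of your identity $\alpha_1+\alpha_2=1$), writes $V_1=g_1(r)$ with $g_1$ decreasing and $V_2=g_2(r)$ with $g_2$ increasing, solves $g_1=g_2$ explicitly via a quadratic to get $\max(V_1,V_2)\geq g_1(x^*)$, and then minimizes $g_1(x^*)$ over $p=p_1+p_2$ and $d=p_1-p_2$ by a concavity argument with endpoint checks at $p\in\{0,\tfrac13,\tfrac23\}$. You instead argue by contradiction: summing the two inequalities eliminates the asymmetry in $p_1,p_2$ and yields $e^{-\alpha_1}+e^{-\alpha_2}>\tfrac43$, which by convexity of $g(\alpha)=e^{-\alpha}+e^{\alpha-1}$ pins $\alpha_1$ near an endpoint; then the single inequality $h_2<\tfrac13$, played against $p_2\geq 0$ and $p_1\leq\tfrac13$, squeezes $u=p_1+p_2$ into the interval $\bigl(\frac{\beta-1/3}{\beta+1/6},\,\frac{2/3-\beta}{5/6-\beta}\bigr)$, which I verified is empty exactly when $\beta\geq\tfrac7{12}$ (the cross-multiplication gives a difference proportional to $\tfrac{2\beta}{3}-\tfrac{7}{18}$), and $\beta>\tfrac7{12}$ reduces via the quadratic $3ey^2-4ey+3=0$ to $e>\tfrac{144}{55}$, which holds. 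All steps check out. Your approach buys a cleaner separation of roles (the summed inequality handles the $S$-variables, the individual inequality handles the $p$-variables) at the cost of one tight numerical fact; the paper's buys an explicit worst-case point at the cost of two second-derivative computations and several endpoint evaluations. Two minor points to make explicit in a full write-up: you divide by $\tfrac56-\beta$, so note that $\beta<1-e^{-1}<\tfrac56$ (immediate from $\alpha^*>0$); and the degenerate case $S_1=S_2=\tfrac12$ leaves the exponent of the form $0/0$ --- this is a defect of the lemma statement that the paper's own proof shares, so it is not held against you.
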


Another main tool used in the proof of Theorem~\ref{thm:halfsvi} is a simple partitioning procedure. 
It shows that $P_k$ can either be partitioned into two sets as 
required in the above observation, or we reach a simple corner case (Case 2.1 in the proof) in which at least one third of the optimal value can be easily attained. For the latter case, we use
the following result, due to~\cite{LB10}.

\begin{lemma}
	\label{thm:lin}
	Let $g: 2^{\ccG}\rightarrow \Reals$ be a
	non-negative and monotone submodular function.
	Let $\OPT = \max \set{ g(S)| S\subseteq {\ccG}, \sum_{i\in S} s_i \leq m}$, and $S=\Call{Greedy}{g, m}$. Then either $g(S) \geq (1-e^{-1/2}) \OPT$,
	or there is an element $i\in {\ccG}$ such that $g(\{i\})\geq (1-e^{-1/2})OPT$ and $s_i \leq m$.
\end{lemma}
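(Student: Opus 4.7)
The plan is to examine the execution of \textsc{Greedy}$(g,m)$ and identify the first element it rejects, if any. If no element is ever rejected, then greedy adds every element it considers, so $S = {\ccG}$ is feasible with $g(S) = g({\ccG}) \geq \OPT$ and the first alternative holds trivially. Henceforth assume greedy rejects at least one element, and let $v^*$ denote the first rejected element and $S'$ the greedy state at the moment $v^*$ is considered; by definition $s(S') + s_{v^*} > m$. Since an optimal $S^*$ contains only elements of size at most $m$, no element of $S^*$ has been rejected in any prior step, so the elements of $S^* \setminus S'$ all remain as candidates at the rejection step.

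Because no rejection occurred before $v^*$, the standard submodular greedy analysis applies up to this step: at each acceptance step $j$ leading to $S'$, the greedy choice gives $g_{S_{j-1}}(\{v_j\})/s_{v_j} \geq g_{S_{j-1}}(\{u\})/s_u$ for every $u \in S^* \setminus S_{j-1}$, which together with submodularity yields $\OPT - g(S_j) \leq (1 - s_{v_j}/m)(\OPT - g(S_{j-1}))$. Telescoping this recursion gives
\[
g(S') \geq \Bigl(1 - \prod_{j}(1 - s_{v_j}/m)\Bigr)\OPT \geq (1 - e^{-s(S')/m})\OPT.
\]

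I then split on the size of $S'$. In Case~1, where $s(S') \geq m/2$, monotonicity together with $S \supseteq S'$ yields $g(S) \geq g(S') \geq (1 - e^{-1/2})\OPT$, establishing the first alternative. In Case~2, where $s(S') < m/2$, the rejection condition forces $s_{v^*} > m/2$, so $v^*$ is a ``large'' singleton candidate. Since at most one element of size exceeding $m/2$ can fit in any feasible solution, $S^*$ contains at most one large element $w$. Assuming the singleton alternative fails, i.e.\ $g(\{v\}) < (1-e^{-1/2})\OPT$ for every feasible $v$, the subadditive bound $\OPT \leq g(\{w\}) + g(S^* \setminus \{w\})$ (with $w$ omitted when $S^*$ contains no large element) forces $g(S^* \setminus \{w\}) > e^{-1/2}\OPT$. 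I would then continue the greedy recursion past the rejection of $v^*$, treating $S^* \setminus \{w\}$ as an ``effective target,'' and conclude $g(S) > (1-e^{-1/2})\OPT$, contradicting the assumed failure of both alternatives.

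The main obstacle lies in the subcase $w = v^*$ of Case~2: the post-rejection greedy recursion incurs a deficit from $v^*$'s absence, so the ratio inequality gives only $g_{S_{j-1}}(\{v_j\}) \geq (s_{v_j}/m)\bigl(\OPT - g(S_{j-1}) - g_{S_{j-1}}(\{v^*\})\bigr)$. Combining this with the upper bound $g_{S_{j-1}}(\{v^*\}) \leq g(\{v^*\}) < (1-e^{-1/2})\OPT$ transforms the recursion into one with the reduced target $e^{-1/2}\OPT$, whose telescoped form yields $g(S) > (1-e^{-1/2})\OPT$ provided the post-rejection greedy accumulates sufficient additional capacity; a careful balancing of these two estimates, as in the argument of Lin and Bilmes~\cite{LB10}, produces the stated bound.
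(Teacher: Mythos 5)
First, a point of reference: the paper does not actually prove Lemma~\ref{thm:lin} --- it imports the statement from \cite{LB10} --- so there is no in-paper argument to compare yours against. The closest result the paper does prove, Lemma~\ref{thm:alpha_greedy} in the appendix, is a Sviridenko-style bound that only gives $g(S)+g(\{i^*\})\geq(1-e^{-1})\OPT$, hence $\max\big(g(S),g(\{i^*\})\big)\geq\frac{1-e^{-1}}{2}\approx 0.316\,\OPT$, which is strictly weaker than the $1-e^{-1/2}\approx 0.3935$ claimed here. Your treatment of the no-rejection case and of Case~1 ($s(S')\geq m/2$, where the telescoped recursion with denominator $m$ already gives $(1-e^{-1/2})\OPT$) is correct. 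A minor structural slip: $v^*$ should be taken as the first rejected element \emph{of $S^*$}, not the first rejected element overall; rejections of elements outside $S^*$ do not invalidate the recursion, and conflating the two muddies your subcase analysis of whether $w=v^*$.

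The genuine gap is Case~2, which you leave as a sketch deferring to ``a careful balancing \ldots as in Lin and Bilmes,'' and the natural completion of your sketch provably falls short of the stated constant. Concretely: assuming every singleton has value below $\alpha\OPT$, your subadditivity step gives $g(S^*\setminus\{w\})>(1-\alpha)\OPT$, and a greedy recursion run \emph{from scratch} against the target $S^*\setminus\{w\}$ (whose total size is below $m/2$, so its first rejected element is rejected at greedy load above $m/2$) yields only $g(S)>(1-e^{-1})(1-\alpha)\OPT$; balancing $(1-e^{-1})(1-\alpha)\geq\alpha$ caps the achievable constant at $\frac{1-e^{-1}}{2-e^{-1}}\approx 0.3873<1-e^{-1/2}$. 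To reach $1-e^{-1/2}$ one needs a genuinely two-phase telescoping: run the recursion with target $S^*$ and denominator $m$ up to the state $S'$ at which the large element $w$ is rejected, obtaining $g(S')\geq(1-e^{-s(S')/m})\OPT$, and then \emph{continue the recursion from $g(S')$ rather than from zero}, with the reduced target $S^*\setminus\{w\}$ and the reduced denominator $s(S^*\setminus\{w\})\leq m-s_w<s(S')$, up to the first rejection of an element of $S^*\setminus\{w\}$. Only with this combination does the worst case (in the limit $s(S')\to m/2$, $s_w\to m/2$, $s(S^*\setminus\{w\})\to m/2$) collapse to exactly $1-e^{-1/2}$, and the verification that the resulting two-parameter expression is minimized at that limit is itself a nontrivial calculation that your proposal does not attempt. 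As written, the argument establishes a constant of roughly $0.387$, not the lemma as stated.
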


\noindent
{\bf Proof of Theorem \ref{thm:halfsvi} [Submodular optimization with reserved capacity]:}
It is easy to see that the running time of the algorithm is polynomial.
Let $S \subseteq {\ccG}$, $s(S)\leq m$, $f(S)= \OPT$, and $k=6$.

\noindent
{\bf \underline {Case 1:}}
	We first handle the case where $|S| < k$.
	Start with $A_1 = \emptyset$, iterate over the elements of
	$S$ and add them to $A_1$, as long as $s(A_1) \leq m/2$.
	If $S\neq A_1$, let $j\in S\setminus A_1$, and set $A_2= \{j\}$
	and $A_3 =S \setminus (A_1 \cup A_2)$.  Clearly, $s(A_2 ) \leq m/2$,
	and since $s(A_1 \cup A_2 ) > m/2$ and $s(S)\leq m$, we have that $s(A_3)\leq m/2$.
	If $S = A_1$  set  $A_2 = A_3 =\emptyset$.
	
	By the submodularity of $f$, we have $f(S)\leq f(A_1) + f(A_2) + f(A_3)$. Hence,
	for some $r\in \{1,2,3\}$, $f(A_r) \geq f(S)/3=\OPT/3$.  We also have
	that $|A_r| \leq 5$; therefore, at some iteration of the algorithm $S_e =B=A_r$, and following this iteration $f(R)\geq \OPT/3$.

\noindent
{\bf \underline{Case 2:}}
	Assume now that $|S| \geq k$. Let $S= \{i_1, i_2, \ldots, i_\ell\}$ such that the
	elements are ordered by their marginal profits:
	$i_j = \argmax_{j\leq r \leq \ell} f_{\{i_1, \ldots i_{j-1}\}}(\{i_r\})$. Set
	$P_k = \{i_1, i_2, \ldots , i_k\}$.
	
	Consider the following process.  Start with $B_1 = \emptyset$ and $B_2 = \emptyset$. Iterate over the elements $i\in P_k$ in decreasing order by size. For each element $i$, let $t = \argmin_{j=1,2} s(B_j)$.
	If $s(B_t)+s_{i} \leq m/2$ then $B_t = B_t \cup \{i\}$;  otherwise, Stop.
	We now distinguish between two sub-cases for the termination of the process.
	

\noindent
{\bf \underline{Case 2.1:}}
		Suppose that the process terminates due to an element $i$ which cannot be added to any of the sets.
		Let $B_1$ and $B_2$ be the sets  in this iteration. Also, set $B_3 = \{i\}$, $U= B_1 \cup B_2 \cup B_3$,
		and $L = S\setminus U$. W.l.o.g assume that $s(B_1) \geq s(B_2)$.
		As the process terminated, we have that $s(B_3)+s(B_2) = s_i + s(B_2) >m/2$. The sets
		$B_1, B_2, B_3$ and $L$  form a partition of  $S$, and $s(S)\leq m$.
		We conclude that $s(B_1) + s(L) \leq m/2$.
		Hence, $s(B_2) +s(L) \leq m/2$, and $s(B_3) + s(L) \leq m/2$
		as well (it is easy to see that $s(B_3)\leq s(B_1)$).
		
		By the submodularity of $f$, $f(U)\leq f(B_1)+f(B_2) +f(B_3)$; thus,
		there is $j\in\{1,2,3\}$ such that $f(B_j)\geq f(U)/3$.
		As none of the sets $B_1,B_2,B_3$ is empty, we have that $|B_j| \leq |P_k| -2=4$.
		
		Let $T = \Call{Greedy}{f_{U},m/2 - s(B_j)}$. By Lemma \ref{thm:lin}, either
		\[
		\begin{array}{c}f_{U} (T) \geq (1-e^{-1/2}) f_{U}(L) \geq f_{U}(L)/3,
		\end{array}
		\]
		or there is $i\in L$ such that
		\[
		\begin{array}{c}f_{U} (\{i\}) \geq (1-e^{-1/2}) f_{U}(L) \geq f_{U}(L)/3.
		\end{array}
		\]
		
		In the former case, we can consider the iteration in which
		$S_e =U, B=B_j$. In this iteration, we have  $$f(B \cup T) \geq f(B_j) + f_{U}(T) \geq
		\frac{1}{3} ( f(U)+ f_U(L)) = \frac{1}{3}\OPT.$$
		In the latter case, we can consider the iteration where
		$S_e  = B=B_j \cup \{i\}$, and in which
		$$f(B \cup T) \geq f(B) \geq f(B_j) + f_{B_j} ( \{i\}) \geq f(B_j) + f_{U} ( \{i\}) \geq \frac{1}{3} \OPT.$$

\noindent
{\bf \underline{Case 2.2:}}	
		The process terminated with $B_1, B_2$  satisfying $B_1 \cup B_2= P_k$, and $s(B_1),s(B_2)\leq m/2$. 
		
		Let $p_1 = f(B_1) /\OPT$, $p_2 = (f(P_k)- f(B_1)) /\OPT$, $S_1 = s(B_1)$, $S_2 = s(B_2)$ and $L = S\setminus P_k$.
		If $p_1\geq \frac{1}{3}$ (or $p_2 \geq \frac{1}{3}$ ) we have that in the iteration
		where $S_e= P_k$  and $B=B_1$ (or $B = B_2$) the algorithm finds a solution of value at least
		$\OPT/3$, and the theorem holds. Thus, we may
		assume that $p_1, p_2 \leq \frac{1}{3}$.
		
		For $j=1,2$, let $T_i = \Call{Greedy}{f_{P_k} , m/2-S_j}$. Using
		Lemma \ref{thm:svi} with $S^* = L$, 
		we have that there is $i_j\in L$  for which
		$$f_{P_k}(T_j)+f_{P_k}(\{i_j\}) \geq  (1-e^{-\frac{\frac{1}{2}- S_j}{ 1- S_1 - S_2}}) f_{P_k} (L).$$
		By the selection of elements in $P_k$, we have 
		$f_{P_k}(\{i_j\})\leq \frac{1}{k} f(P_k)$. Thus,
		$$f_{B_j}(T_j)\geq f_{P_k}(T_j)\geq  (1-e^{-\frac{\frac{1}{2}- S_j}{ 1- S_1 - S_2}}) f_{P_k} (L) - \frac{1}{k} f(P_k).$$
		Hence, in the iteration where $S_e = P_k, B=B_j$, we obtain a solution satisfying
		\[
		\begin{array}{ll}
		f(S_e\cup T) & = f(B_j \cup T_j) \geq
		f(B_j) + (1-e^{-\frac{ \frac{1}{2} - S_j}{1-S_1 -S_2 }}) f_{P_k} (L) - \frac{1}{k} f(P_k) \\ \\
	   & \geq \displaystyle{ \OPT \left( p_j +(1-p_1+p_2) (1-\exp( -\frac{ \frac{1}{2} - S_j}{1-S_1 -S_2 } ) ) - \frac{p_1+p_2}{k}\right).}
		\end{array}
		\]
		By Lemma \ref{thm:sub_technical}, 
		in one of these iterations we obtain a solution of value at least $\OPT/3$, implying the statement of the theorem. \hfill \qed

\bigskip
\noindent
{\bf Proof of Lemma \ref{thm:sub_technical}:}
Let $p_1, p_2, S_1, S_2$ be values that satisfy the conditions in the lemma. 
Denote $p = p_1 + p_2$, $d= p_1 - p_2$ and $r = \exp\left(-\frac{\frac{1}{2} - S_1} {1 -S_1 -S_2}\right)$.
It is easy to see that $\exp\left(-\frac{\frac{1}{2} - S_2} {1 -S_1 -S_2}\right) = \frac{e^{-1}}{r}$.
Define $V_1 = h(p_1, p_2, S_1, S_2)$ and $V_2 = h(p_2, p_1, S_2, S_1)$.  By the definition of $h$ and above 
definitions we get 
$$V_1 = \frac{p+d}{2} +(1-p)(1-r) - \frac{p}{k}$$
and 
$$V_2 = \frac{p-d}{2} +(1-p)\left(1-\frac{e^{-1}}{r}\right) - \frac{p}{k}$$
Let $g_1(x) = \frac{p+d}{2} +(1-p)(1-x) - \frac{p}{k}$ 
and $g_2(x) = \frac{p-d}{2} +(1-p)\left(1-\frac{e^{-1}}{x}\right) - \frac{p}{k}$. 
Clearly, $V_1 = g_1(r)$ and $V_2 = g_2(r)$. It is also easy to 
see that $g_1$ is decreasing and $g_2$ is increasing (for $x>0$).
\begin{claim}
	It holds that $g_1(x^*) = g_2(x^*)$ where  $x^*= \frac{d + \sqrt{d^2 + 4e^{-1} (1-p)^2}}{2(1-p)}$.
	
\end{claim} 
\begin{proof}
	By rearranging terms we have  $g_1(x) = g_2(x)$ 
	if and only if $d = (1-p) ( x -\frac{e^{-1}}{x})$, which holds
	for $x>0$ if and only if
	$0 = (1-p) x^2 -dx -e^{-1}(1-p)$. The latter is a quadratic
	equation and $x^*>0$ is a root.  
\end{proof}

If $r\geq x^*$, since $g_2$ is increasing, we have $V_2 = g_2(r) \geq g_2(x^*) = g_1(x^*)$, and if $r\leq x^*$, as $g_1$ is decreasing, we have $V_1 = g_1(r) \geq g_1(x^*)$. Therefore $\max (V_1, V_2) \geq g_1(x^*)$. 

Consider two cases:

\noindent
{\bf Case 1:} $p\leq \frac{1}{3}$. We have $|d|\leq p$  and
$$g_1(x^*) = 1-\frac{p}{2} -\frac{\sqrt{d^2 + 4e^{-1}(1-p)^2}}{2} - \frac{p}{k} \geq 1-  \frac{p}{2} - \frac{p}{k} - \frac{\sqrt{p^2 + 4e^{-1}(1-p)^2}}{2}.$$

Let $a(y) = 1- \frac{y}{2} - \frac{y}{k} - \frac{\sqrt{y^2 + 4e^{-1}(1-y)^2}}{2}$. Clearly, $g_1(x^*)\geq a(p)$. 
We note that $a''(y)= -\frac{2e^{-1}} {( y^2+ 4e^{-1} (1-y)^2 )^{\frac{3}{2}}} $.
As $a''(y) < 0$, $a(y)$ does not have 
a local minima in $[0, \frac{1}{3}]$. Thus as $p\in [0, \frac{1}{3}]$
we have $$g_1(x^*)\geq a(p) \geq \min\left( a(0), a\left(\frac{1}{3}\right)\right) = \min \left( 1-e^{-0.5}, \frac{5}{6}- \frac{1}{3k} -\frac{\sqrt{1+16e^{-1}}}{6}\right) \geq \frac{1}{3}$$

\noindent
{\bf Case 2:} $p\geq  \frac{1}{3}$, as $\frac{p+d}{2}, \frac{p-d}{2} \leq \frac{1}{3}$ we have that $|d| \leq \frac{2}{3}- p$. Therefore,
$$g_1(x^*) = 1-\frac{p}{2} -\frac{\sqrt{d^2 + 4e^{-1}(1-p)^2}}{2} - \frac{p}{k} \geq 1-  \frac{p}{2} - \frac{p}{k} - \frac{\sqrt{(\frac{2}{3}-p)^2 + 4e^{-1}(1-p)^2}}{2}$$
Let  $b(z) = 1- \frac{z}{2} - \frac{z}{k} - \frac{\sqrt{(\frac{2}{3} - z)^2 + 4e^{-1}(1-z)^2}}{2}$. Clearly, $g_1(x^*) \geq b(p)$.  We note 
that $$b''(z) = -\frac{2e^{-1}}{9 \left( 4e^{-1}(1-z)^2 +(\frac{3}{2}-z )^2 \right) ^{\frac{3}{2}}}<0.$$
Thus,  $b(z)$ does not have a local minima in $[\frac{1}{3}, \frac{2}{3}]$. Hence, we have that $g_1(x^*) \geq b(p) \geq \min(b(\frac{1}{3}),b( \frac{2}{3})) \geq \frac{1}{3}$.

In both cases we get $g_1(x^*)\geq\frac{1}{3}$, and as $\max(V_1, V_2)\geq g_1(x^*)$, the lemma immediately follows.
\hfill \qed

\section{Filling Phase}
\label{filling_sec}

In this section we prove Theorem~\ref{thm:filling}. 
Define the {size of bin} $j$ in assignment $U$ as  $s^U_j = \sum_{i\in U_j} s_i$.
We first divide the bins and items into types.
We say that a bin $j$ is {\em full}  if $s^U_j > 1$, {\em semi-full} if $ \frac{1}{2} \leq s^U_j \leq 1$, and
{\em semi-vacant} if $s^U_j < \frac{1}{2}$.
An item $i \in I(U)$ is {\em big} if $s_i > \frac{1}{2}$; otherwise,
$i$ is {\em small}. Clearly, there are no big items in semi-vacant bins.

Informally, we use in the proof several types of resolution steps. Each step takes as input
a full bin and possibly one or two semi-vacant bins, and reassigns some of the items into the bins while evicting others.
These resolution steps ensure that the new assignment has at least half the profit of the original assignment, the assignment to any bin remains feasible, and only small items
are evicted.

We apply the resolution steps repeatedly,  but once a bin participated
in a resolution step it may not participate in another one.
We then prove that as long as there are full bins, one of the steps can be
applied. Hence, by applying the resolution steps, we have a new
assignment in which all bins are feasible, and the total profit is at
least half the profit of the original assignment.
To handle the evicted items, we note that
as $s(I(U))\leq m/2$  and all the evicted items are small, it
is possible to assign the evicted items to bins without
violating the capacity constraints.

\bigskip

\noindent
{\bf Proof of Theorem~\ref{thm:filling}:}
For any bin $1 \leq j \leq m$ and $A \subseteq I$, let $p_j(A)  = \sum_{i\in A} p_{i,j}$ be
the total profit gained from packing $A$ into $j$.
The first step is to resolve the violation of the capacity
constraint in each full bin. We do that using four types of
resolution steps. Each step takes a full bin and possibly one or two semi-vacant bins,
modifies their contents and adds some small elements to a set $V$ of {\em evicted}
elements, that will be handled later. Throughout the discussion, we consider for a full bin $j$ a partition of the elements in $U_j$ into
two feasible subsets, given by $\set{A_j,B_j}$. We use the following resolution steps.

\begin{enumerate}
	\item
	\label{nobig}
	Consider a full bin $j$ such that $U_j$ has no big elements.
	If $p_j(A_j) > p_j(B_j)$ then set $U'_j = A_j$ and evict $B_j$ ($V : = V \cup B_j$);
	otherwise, set $U'_j = B_j$ and evict $A_j$, i.e., $V : =V \cup A_j$.
	In both cases $U'_j$ is feasible and $p_j(U'_j) \geq \frac{1}{2} \cdot p_j(U_j)$.
	\item
	\label{one_big}
	Now, suppose we have a full bin $j$ such that $U_j$ has a single big element, and
	a semi-vacant bin $\ell$.
	Let $\set{A^*, B^*} = \set{A_j,B_j}$, such that the big element is in $A^*$.
	
	If $p_j(A^*) +p_\ell(U_\ell) > p_j(B^*)$, set $U'_j = A^*$, $U'_\ell= U_\ell$
	and evict the elements in $B^*$ ($V := V \cup B^*$). We note that in this
	case $p_j(U'_j) + p_\ell(U'_\ell) = p_j(A^*) +p_\ell(U_\ell)$.
	
	Otherwise, set $U'_j = B^*$ and $U'_\ell= A^*$, and evict all the elements
	in $U_\ell$, i.e., $V := V \cup U_\ell$. In this case we have
	$p_j(U'_j) + p_\ell(U'_\ell) \geq p_j(B^*)$.
	
	Therefore, in both cases have  $p_j(U'_j) + p_\ell(U'_\ell) \geq \frac{1}{2} \cdot \left(p_j(U_j) + p_\ell(U_\ell)\right).$
	
	\item \label{two_big}
	Consider a full bin $j$ such that $U_j$ has two big elements, and
	a semi-vacant bin $\ell$, such that one of the big elements has space
	in bin $\ell$; that is, there is a big element $i^* \in U_j$ such that
	$s_{i^*} + s^U_{\ell} \leq 1$.
	
	Let  $\set{A^*, B^*} = \{ A_j,B_j \}$ such that $i^*\in A^*$.  We note that
	there cannot be any other big element in $A^*$ other that $i^*$.
	
	If $p_j(B^*) + p_\ell(U_\ell) > p_j(A^*)$ set  $U'_j = B^*$ and
	$U'_\ell = U_\ell \cup \set{i^*}$ (note that $s^{U'}_{\ell} \leq 1$). Also,
	evict all elements in $A^* \setminus \set{i^*}$.
	In this case we have $p_j(U'_j) + p_\ell(U'_\ell) \geq p_j(B^*)+ p_\ell(U_\ell)$.
	
	Otherwise, we set $U'_j = A^*$ and $U'_\ell = B^*$, and evict
	$U_\ell$ ($V := V \cup U_\ell$). In this case we have
	$p_j(U'_j) + p_\ell(U'_\ell) \geq p_j(A^*)$.

	Thus, in both cases
	$p_j(U'_j) + p_\ell(U'_\ell) \geq \frac{1}{2} \cdot \left(p_j(U_j) + p_\ell(U_\ell)\right).$
	
	\item \label{last_resort}
	Finally, consider a full bin $j$, such that $U_j$ has two big elements, and
	two semi-vacant bins $\ell_1$ and $\ell_2$. 
	Recall $A_j,B_j$ is a partition of the elements in $U_j$ into two feasible subsets.
	
	If $p_j(A_j)  + p_{\ell_1}  (U_{\ell_1})> p_j(B_j) + p_{\ell_2} ( U_{\ell_2})$,
	set $U'_j = A_j$ , $U'_{\ell_1} = U_{\ell_1}$  and $U'_{\ell_2} = B_j$, and
	evict $U_{\ell_2}$.  Thus, we have
	$$p_j(U'_j) + p_{\ell_1}(U'_{\ell_1}) + p_{\ell_2}(U'_{\ell_2}) \geq
	p_j(A_j) + p_{\ell_1} ( U_{\ell_1}).$$
	
	Otherwise, set $U'_j = B_j$, $U'_{\ell_1} = A_j$ and $U'_{\ell_2} = U_{\ell_2}$ and evict
	$U_{\ell_1}$. Here $$p_j(U'_j) + p_{\ell_1}(U'_{\ell_1}) + p_{\ell_2}(U'_{\ell_2}) \geq
	p_j(B_j) + p_{\ell_2} ( U_{\ell_2}).$$
	
	And finally, we always have
	\[p_j(U'_j) + p_{\ell_1}(U'_{\ell_1}) + p_{\ell_2}(U'_{\ell_2}) \geq
	\frac{1}{2} \cdot \left(  p_j(U_j) + p_{\ell_1}(U_{\ell_1}) + p_{\ell_2}(U_{\ell_2})  \right) .
	\]
	
\end{enumerate}

Each time we execute a step we mark the bins used in this step
as {\em resolved}, and we do not consider them in the next steps.
We first use Steps \ref{nobig},\ref{one_big}, and \ref{two_big}, until none of them can
be applied.

Consider the average size of the unresolved bins.
When we start, there are $m$ bins of average size no greater than half.
Each of Steps  \ref{nobig},\ref{one_big}, \ref{two_big}  reduces
the size of the unresolved bins by at least one (as a full bin is removed)
and reduces the number of bins by at most two. Therefore, the
average size of the unresolved bins remains no more than half.
Also, marking all the semi-full bins as resolved  preserves the property.

Let $a$ be the number of unresolved full bins  and $c$ be the number of
unresolved semi-vacant bins.  Due to the average size of bins,
we have $a \leq \frac{a+c}{2}$, therefore $a \leq c$.
Hence, if we have a full bin, there must be a semi-vacant bin as well.
As we used Steps \ref{nobig},\ref{one_big}, \ref{two_big}  to exhaustion,
every full bin must have two big elements (no bin in $U$ contains more than two big elements),
and none of these big elements can fit into
one of the semi-vacant bins.

Denote the minimal size of a semi-vacant unresolved bin by $r$. Then
each of the full bins has two big elements of size greater than $1-r$.
Hence, we have
$c\cdot r + 2  a \cdot ( 1-r) < \frac{a+c}{2}$,
which leads to
$ 2 a(1-r) - \frac{a}{2}  < c\left(\frac{1}{2} - r \right)$,
and

\[
\begin{array}{ll}
c &  > a \cdot \displaystyle{\frac{2-2r - \frac{1}{2}}{\frac{1}{2} -r }=
	a \cdot \frac{3-4r  }{1 -2r }} \\
\\
& = \displaystyle{
	a \cdot \left( \frac{2-4r}{1-2r} + \frac{1}{1-2r}\right) =
	a \cdot \left( 2+ \frac{1}{1-2r}\right)  > 2a},
\end{array}
\]
implying that we can now run Step \ref{last_resort}, until there are no more
unresolved full bins.

We use the resolution steps to eliminate all the full bins. Every time
we run such a step over a set of bins we lose at most half the profit of
the bins participating in the step.
As the total size of items in the assignment is bounded by $m/2$, we
are guaranteed that it is possible to assign all the evicted elements to some
bins.
Thus, we are able to resolve the capacity overflow while losing at most
half of the profit. \hfill \qed

\section{Discussion and Future Work}

In this paper we presented a $\frac{1}{6}$-approximation algorithm for {\ggap}, using  a mild restriction
 on group sizes. A key component in our result is 
an algorithm for submodular maximization subject to a knapsack constraint,
which finds a solution occupying
at most half the knapsack capacity, while the other half is {\em reserved} for later use.
Our results leave several avenues for future work.

As mentioned above, {\ggap} with no assumption on group sizes cannot be
approximated within any constant factor. 
Yet, the maximum group size that still allows to obtain a 
constant ratio can be anywhere in $[\frac{m}{2}, \frac{2}{3}m]$.
A natural question is whether our results can be applied to instances with larger group sizes. We note that the ratio stated in Theorem~\ref{thm:ggap_appx_ratio} 
may not hold already for instances in which group sizes can be at most $\frac{m}{2}(1 + \eps)$, for some $\eps >0$. Indeed, for such instances, 
it may be the case that no set of groups of total size at most $m/2$ is `good' relative to the optimum. 
The existence of an algorithm that yields a constant ratio for such instances remains open.

While our result for submodular optimization with reserved capacity (Theorem \ref{thm:halfsvi}) gives an optimal approximation ratio for the studied subclass of instances, we believe the result 
can be extended to other subclasses. In particular, we conjecture that for instances where each item is of size at most $\delta>0$, the approximation ratio approaches
$1-e^{-\frac{1}{2}}$ as $\delta \rightarrow 0$.
Such result would yield improved the approximation ratio for instances of {\ggap} in which the total size of each 
group is bounded by $\delta m$. We defer this line of work to the full version of the paper.

Lastly, we introduced in the paper the novel approach of submodular optimization subject to a knapsack with reserved capacity constraint. We applied the approach along with  
a framework similar to the one developed in \cite{AF+16}. It is natural 
to ask whether the approach can be used to improve the approximation ratio 
obtained in \cite{AF+16} for {\em all-or-nothing} {\sf GAP}.

\bibliography{ggap}
\newpage
\appendix
\section{Appendix}
\label{sec:appendix}
\newcommand{\cM} { {\cal M}}

\subsection{Applications of {\sf Group GAP}}
\label{sec:appendix_applics}
As a primary motivation for our study, consider wireless networks~(see, e.g., \cite{KPCK14,WZSL17}) a carrier supplies a set of data items to a set of users upon demand.
Each user has an isolated cache memory of a given size; each item
consists of a set of blocks. Time is slotted. The carrier smooths out the network load by caching at time $t=0$ data block within the user caches, and later lets the users
share these blocks at time $t \geq 1$ via a D2D (Device-to-Device) network. There is a utility associated with storing a data block of an item in each user cache (depending, e.g., on the location of this user in the network).
Thus, viewing each item as a group of blocks, finding a feasible allocation of data blocks in the users caches so as to maximize the total utility yields an instance of {\sf Group GAP}.

{\ggap} has a central application also in 5G networks.
Caching at small-cell base stations (SBSs) to meet the overwhelming growth in mobile data demand is expected to
be utilized in the next generation (5G) cellular networks~\cite{PT16,BBD14}.
This technique leverages the expected deployment of short-range, low-power, and low-cost SBSs in cellular networks.
It is expected that this will enable immersive Virtual Reality (VR) and Augmented Reality (AR) mobile applications that are currently limited to off-line settings~\cite{Chak17}.
A VR/AR $360^{\circ}$-navigable 3D video presents
the user with an interactive immersive visual representation of the
remote scene of interest that the user can navigate from any possible
3D viewpoint. To enable such capability in a mobile device with limited computation power and storage, the 3D video is decomposed into different
views and {\em all} these views must be cached in the SBSs.
There is a utility associated with storing a specific view in each SBS (depending, e.g., on the projected popularity of this view, its size relative to the cache size, etc.).
Thus, viewing each 3D video as a group of views, finding a feasible allocation of these views to SBSs to maximize the total utility yields an instance of
{\sf Group GAP}.

\subsection{Some Proofs}
\label{sec:appendix_proofs}
\noindent
{\bf Proof of Theorem \ref{thm:shmoys_trados}:}
Let $S=I(G^*)$. 
We first note that since $s(S)=s(I(G^*)) \leq m$, any fractional solution $x$ can be modified to a solution $x'$ of at least the same profit, such that for all
$i\in S$ it holds that $\sum_{j\in M} x'_{i,j} = 1$.
Thus, we may assume that
for any $i\in S$ we have $\sum_{j\in M} x_{i,j} =1$.
W.l.o.g., assume  the items are ordered by their size; that is,
$s_1 \geq s_2 \geq \ldots \geq s_n$.
We use Algorithm~\ref{alg:st_construct} to generate
a bipartite graph $G = (S, R, E)$ with weight $w_e \geq 0$ on any edge $e \in E$,
and a fractional matching $x'$ in $G$.

We note the following observations about Algorithm~\ref{alg:st_construct}.

\begin{itemize}
	\item
	The algorithm returns a fractional matching $x'$ which is complete for $S$.
	For each item $i\in S$ and bin $j\in M$ it holds that
	$x_{i,j} = \sum_{e=(i,(j,r) )\in E} x'_e$. Therefore,
	the weight of $x'$ is equal to the profit of the solution $x$ for
	the  $\LP$, i.e., $w(x') = p\cdot x$.
	
	\item  For any bin $j\in M$ let $k_j$ be the maximum number such
	that $(j,k_j) \in R$. For any $1\leq r \leq k_j$ let
	$s_{j,r} ^{\max} = \max\{ s_i | (i,(j,r))\in E\}$ and
   $s_{j,r} ^{\min} = \min\{ s_i | (i,(j,r))\in E\}$.
   By the construction, it is easy to see that
   $s_{j,1}^{\max}\geq s_{j,1}^{\min}\geq s_{j,2}^{\max} \geq s_{j,2}^{\min}\geq \ldots
   s_{j,k_j}^{\max}\geq s_{j,k_k}^{min}$
	\item For any bin $j\in M$ and $1\leq r <k_j$ we have $\sum_{i|(i,(j,r))\in E} x'_{(i,(j,r))} = 1$.
\end{itemize}

Using standard results from matching theory
 (see, e.g.,~\cite{LP09}), there is a matching $\cM$ for $G$
which is complete for $S$  and $\sum_{e\in M} w_e \geq w(x')$. Such
a matching $\cM$ can also be found in polynomial time.

We define an assignment based on $\cM$; that is,
for every bin $j\in M$ and $1\leq r \leq k_j$ set
$U_{j,r}= \{i\in S| (i,(j,r))\in \cM\}$. Now, define
for every $j\in M$: $U_j =\bigcup_{r=1}^{k_j} U_{j,r}$ and $U = (U_1, U_2,\ldots, U_m)$.
As $\cM$ is complete for $S$, each element $i\in S$ is assigned to a single bin,
and each of the sets $U_{j,r}$ contains at most a single element.
Now, for every $j \in M$ we have
\begin{equation}
\label{eq:almost_feasible}
\begin{aligned}
\displaystyle{ \sum_{i \in U_j} s_i }  & =  \displaystyle{\sum_{r=1}^{k_j} \sum_{i \in U_{j,r}} s_i   
 \leq 		
 \sum_{i\in U_{j,1}} s_i + \sum_{r=2}^{k_j} s_{j,r}^{\max} 
   \leq 
   \sum_{i\in U_{j,1}} s_i  + \sum_{r=1}^{k_j-1} s_{j,r}^{\min}}
  \\  \\
  &=  \displaystyle{  \sum_{i\in U_{j,1}} s_i  + \sum_{r=1}^{k_j-1}\sum_{\{i| (i, (j,r))\in E\}} x'_{(i,(j,r))}  s_{j,r}^{\min} }
  \\ \\
  & \leq
 \displaystyle{\sum_{i\in U_{j,1}} s_i  + \sum_{r=1}^{k_j-1}\sum_{\{i| (i, (j,r))\in E\}} x'_{(i,(j,r))} s_i 
  \leq  
   \sum_{i\in U_{j,1}} s_i  + \sum_{i\in S} x_{i,j} s_i}
  \\ \\
  & \leq 
  \displaystyle{1+ \sum_{i\in U_{j,1}} s_i}
\end{aligned}
\end{equation}
The
first inequality follows from the definition of $s_{j,r}^{\max}$ and $|U_{j,r}| \leq 1$.
The second inequality holds since $s_{j,r}^{\max} \leq s_{j,r-1}^{\min}$.
The second equality holds since $\sum_{i|(i,(j,r))\in E} x'_{(i,(j,r))} = 1$ for $1\leq r<k_j$.
In the third inequality we use the definition of $s_{j,r}^{\min}$, and
the fourth  inequality follows from $x_{i,j} = \sum_{e=(i,(j,r) )\in E} x'_e$.
The last inequality holds since $x$ is a (feasible) solution for $\LP(S)$.

\begin{algorithm}
	\caption{Bipartite Construction}
	\label{alg:st_construct}
	
	\begin{algorithmic}[1]	
		\State Set $R=E=\emptyset$
		\For {$j\in M$}
		\State Set $r=1$ and $c_r=0$, add $(j,1)$ to $R$
		\For {$i$ from $1$ to $n$ such that $x_{i,j}>0$}
		\State Add an edge $e=(i,(j,r))$ to $E$, $w_e =p_{i,j}$
					and set $x'_e = \min(x_{i,j}, 1-c_r)$
		\State Update $c_r = c_r + x'_e$.
		\If{$x_{i,j} \neq x'_e$}
			\State Add $(j,r+1)$ to $R$ and $f= (i,(j,r+1))$ to $E$.
			 \State Set  $w_f = p_{i,j}$, $x'_f  = x_{i,j} -  x'_e$, $r=r+1$ and  $c_r = x'_f$.
		\EndIf
		\EndFor
		
		\EndFor
		\State Return $G=(S, R, E)$, the weights $w$ and the
		fractional matching $x'$.
	\end{algorithmic}
	
\end{algorithm}

By inequality \eqref{eq:almost_feasible} and since $U_{j,1}$ contains at most a single element, we have that $U$ is an almost
feasible assignment.  We note that the profit of the assignment $U$ is
exactly the weight of the matching $\cM$, which is at least the profit
of $x$.

This suggests the following conversion procedure. Given a solution $x$ for $\LP(S)$, use Algorithm \ref{alg:st_construct} to obtain a bipartite graph $G= (S, R, E)$ with weights $w$. Find a
maximal weight matching $\cM$ in $G$ which is complete for $S$. Define an assignment
 $U$ using $\cM$ and return this assignment.
 By the above discussion, this procedure is polynomial and returns an assignment
 as required by the theorem.
\qed

\noindent
{\bf Proof of Theorem~\ref{thm:submod_bp_matching}:}
Consider two subsets $S\subset T \subset A$. Let $v\in A\setminus T$. To prove submodularity we show that $h(S\cup\{v\})-h(S) \ge h(T\cup\{v\})-h(T)$.

Let $S' = S\cup\{v\}$ and $T' = T\cup\{v\}$. For a subset $X\subseteq A$ let $M_X$ be a matching whose weight achieves $h(X)$. (Note that there may be more than one such matching, we fix one arbitrarily.) For $u \in X$, define $M_X(u)$ to be the vertex in $B$ that is matched to $u$.
Also, for $Y\subseteq X$ let $M_X(Y)$ be the set of matching {\em edges} that touch the vertices in $Y$. That is, $M_X(Y) =\cup_{y\in Y} \{(y,M_X(y))\}$.

To prove the theorem we need the following lemma.
\begin{lemma}
\label{lem:augPath}
There exists a subset $X=\{x_1=v,\ldots,x_k\}\subseteq S'$, for some $k\ge 1$,
with the following two properties.
\begin{enumerate}
\item
For $i\in [1..k-1]$, $M_{S}(x_{i+1}) = M_{T'}(x_{i})$
\item
$M_{T'}(x_k)$ is not matched in $M_S$
\end{enumerate}
\end{lemma}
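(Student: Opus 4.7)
The plan is to construct the sequence $x_1, x_2, \ldots, x_k$ greedily by walking along an alternating-path-like structure inside the symmetric difference of $M_S$ and $M_{T'}$, starting from $v$. Before doing so, I would first observe that since the underlying graph is complete bipartite with non-negative weights and $|B|\geq |A|$, we may assume without loss of generality that $M_S$ saturates $S$ and $M_{T'}$ saturates $T'$: any maximum-weight matching can be extended using zero-weight edges without changing its weight. In particular, since $S' \subseteq T'$, the value $M_{T'}(x)$ is defined for every $x \in S'$, and $M_S(x)$ is defined for every $x \in S$.

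The construction is as follows. Set $x_1 = v$. Inductively, given $x_1, \ldots, x_i \in S'$, let $b_i = M_{T'}(x_i) \in B$. If $b_i$ is not matched in $M_S$, terminate with $k = i$. Otherwise, let $x_{i+1}$ be the unique vertex in $A$ with $M_S(x_{i+1}) = b_i$; since only vertices of $S$ are matched by $M_S$, we have $x_{i+1} \in S \subseteq S'$, so the construction stays inside the ground set. Property 1 of the lemma then holds by definition, and property 2 holds precisely at termination.

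The main obstacle is to guarantee that the process actually terminates, which reduces to proving that $x_1, x_2, \ldots$ are pairwise distinct. I would prove this by contradiction: assuming $x_1, \ldots, x_i$ are distinct, suppose $x_{i+1} = x_j$ for some $j \leq i$. If $j = 1$, then $x_{i+1} = v$ and $M_S(v) = b_i$, contradicting $v \notin S$. If $j \geq 2$, then by construction $M_S(x_j) = b_{j-1} = M_{T'}(x_{j-1})$, while $M_S(x_{i+1}) = b_i = M_{T'}(x_i)$; equating these yields $M_{T'}(x_{j-1}) = M_{T'}(x_i)$, and since $M_{T'}$ is a matching this forces $x_{j-1} = x_i$, contradicting the inductive distinctness assumption.

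Because $S'$ is finite and the $x_i$'s are distinct, the process must terminate after at most $|S'|$ steps; at termination, $M_{T'}(x_k)$ is by construction unmatched in $M_S$, giving property 2. I expect the termination/distinctness step to be the only subtle point; the rest is essentially the bookkeeping of an alternating-path traversal from an unsaturated vertex (here $v$, viewed as unsaturated from the perspective of $M_S$).
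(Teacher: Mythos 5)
Your proof is correct and follows essentially the same iterative alternating-path construction as the paper's own proof of Lemma~\ref{lem:augPath}: start at $v$, repeatedly follow $M_{T'}$ then $M_S$, and stop when an $M_S$-unmatched vertex of $B$ is reached. The only difference is that you spell out the pairwise-distinctness argument (via injectivity of the matchings) and the saturation assumption (justified by the zero-weight edges), both of which the paper merely asserts.
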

Before we prove the lemma, we show how it implies the theorem.
Let $E_1 = M_S(X\setminus\{x_1\})$, and $E_2 = M_{T'}(X)$.

We claim that $h(S')-h(S) \ge W(E_2)-W(E_1)$. Consider the set of edges $E_2
\cup M_S(S\setminus X)$. Notice that it is a matching in the subgraph induced by
$S'\cup B$. Clearly, $h(S') \ge W(E_2 \cup M_S(S\setminus X)) =
W(E_2)+W(M_S(S\setminus X))$. Also, by definition,
$h(S)=W(E_1\cup M_S(S\setminus X)) = W(E_1)+W(M_S(S\setminus X))$.
Thus, $h(S')-h(S) \ge W(E_2)-W(E_1)$.

We claim that $h(T')-h(T) \le W(E_2)-W(E_1)$. By definition,
$h(T')=W(E_2\cup M_{T'}(T\setminus X)) = W(E_2)+W(M_{T'}(T\setminus X))$.
Consider the set of
edges $E_1 \cup M_{T'}(T\setminus X)$. Notice that it is a matching in the
subgraph induced by $T\cup B$. Clearly,
$h(T) \ge W(E_1 \cup M_{T'}(T\setminus X)) = W(E_1)+W(M_{T'}(T\setminus X))$.
Thus, $h(T')-h(T) \le W(E_2)-W(E_1)$. The proof follows.
\hfill \qed

\begin{dl_proof}{\ref{lem:augPath}}
We build $X$ iteratively. We start with $X=\{v\}$. If $M_{T'}(v)$ is not
matched in $M_S$, we are done since $X$ satisfies both properties. (The first
property is satisfied vacuously.) Otherwise, define $x_2$ to be the vertex that
satisfies  $M_S(x_2) = M_{T'}(x_1=v)$. Note that $x_2\ne x_1$. Also,
$X=\{x_1,x_2\}$ satisfies the first property by definition. If $M_{T'}(x_2)$ is
not matched in $M_S$, we are done since $X$ satisfies both properties.
Otherwise, define $x_3$ to be the vertex that satisfies   $M_S(x_3) =
M_{T'}(x_2)$. Note that $x_3 \notin \{x_1,x_2\}$. We continue in the same manner
if we are not done. Namely, at stage $i$ we have the set $\{x_1,\ldots,x_i\}$
that satisfies the first property. If $M_{T'}(x_i)$ is not matched in $M_S$, we
are done. Otherwise, define $x_{i+1}$ to be the vertex that satisfies
$M_S(x_{i+1}) = M_{T'}(x_i)$.  Again, $x_{i+1} \notin \{x_1,\ldots,x_i\}$. Since
in each stage we add a new vertex from $S$,  and $S$ is finite, we are guaranteed
to be done at some stage.
\end{dl_proof}

\remove{
\bigskip
\noindent
{\bf Proof of Lemma \ref{thm:sub_technical}:}
Let $p_1, p_2, S_1, S_2$ be values that satisfy the conditions in the lemma. 
Denote $p = p_1 + p_2$, $d= p_1 - p_2$ and $r = \exp\left(-\frac{\frac{1}{2} - S_1} {1 -S_1 -S_2}\right)$.
It is easy to see that $\exp\left(-\frac{\frac{1}{2} - S_2} {1 -S_1 -S_2}\right) = \frac{e^{-1}}{r}$.
Define $V_1 = h(p_1, p_2, S_1, S_2)$ and $V_2 = h(p_2, p_1, S_2, S_1)$.  By the definition of $h$ and above 
definitions we get 
$$V_1 = \frac{p+d}{2} +(1-p)(1-r) - \frac{p}{k}$$
and 
$$V_2 = \frac{p-d}{2} +(1-p)\left(1-\frac{e^{-1}}{r}\right) - \frac{p}{k}$$
Let $g_1(x) = \frac{p+d}{2} +(1-p)(1-x) - \frac{p}{k}$ 
and $g_2(x) = \frac{p-d}{2} +(1-p)\left(1-\frac{e^{-1}}{x}\right) - \frac{p}{k}$. 
Clearly, $V_1 = g_1(r)$ and $V_2 = g_2(r)$. It is also easy to 
see that $g_1$ is decreasing and $g_2$ is increasing (for $x>0$).
\begin{claim}
	It holds that $g_1(x^*) = g_2(x^*)$ where  $x^*= \frac{d + \sqrt{d^2 + 4e^{-1} (1-p)^2}}{2(1-p)}$.
	
\end{claim} 
\begin{proof}
	By rearranging terms we have  $g_1(x) = g_2(x)$ 
	if and only if $d = (1-p) ( x -\frac{e^{-1}}{x})$, which holds
	for $x>0$ if and only if
	$0 = (1-p) x^2 -dx -e^{-1}(1-p)$. The latter is a quadratic
	equation and $x^*>0$ is a root.  
\end{proof}

If $r\geq x^*$, since $g_2$ is increasing, we have $V_2 = g_2(r) \geq g_2(x^*) = g_1(x^*)$, and if $r\leq x^*$, as $g_1$ is decreasing, we have $V_1 = g_1(r) \geq g_1(x^*)$. Therefore $\max (V_1, V_2) \geq g_1(x^*)$. 

Consider two cases:

\noindent
{\bf Case 1:} $p\leq \frac{1}{3}$. We have $|d|\leq p$  and
$$g_1(x^*) = 1-\frac{p}{2} -\frac{\sqrt{d^2 + 4e^{-1}(1-p)^2}}{2} - \frac{p}{k} \geq 1-  \frac{p}{2} - \frac{p}{k} - \frac{\sqrt{p^2 + 4e^{-1}(1-p)^2}}{2}.$$

Let $a(y) = 1- \frac{y}{2} - \frac{y}{k} - \frac{\sqrt{y^2 + 4e^{-1}(1-y)^2}}{2}$. Clearly, $g_1(x^*)\geq a(p)$. 
We note that $a''(y)= -\frac{2e^{-1}} {( y^2+ 4e^{-1} (1-y)^2 )^{\frac{3}{2}}} $.
As $a''(y) < 0$, $a(y)$ does not have 
a local minima in $[0, \frac{1}{3}]$. Thus as $p\in [0, \frac{1}{3}]$
we have $$g_1(x^*)\geq a(p) \geq \min\left( a(0), a\left(\frac{1}{3}\right)\right) = \min \left( 1-e^{-0.5}, \frac{5}{6}- \frac{1}{3k} -\frac{\sqrt{1+16e^{-1}}}{6}\right) \geq \frac{1}{3}$$

\noindent
{\bf Case 2:} $p\geq  \frac{1}{3}$, as $\frac{p+d}{2}, \frac{p-d}{2} \leq \frac{1}{3}$ we have that $|d| \leq \frac{2}{3}- p$. Therefore,
$$g_1(x^*) = 1-\frac{p}{2} -\frac{\sqrt{d^2 + 4e^{-1}(1-p)^2}}{2} - \frac{p}{k} \geq 1-  \frac{p}{2} - \frac{p}{k} - \frac{\sqrt{(\frac{2}{3}-p)^2 + 4e^{-1}(1-p)^2}}{2}$$
Let  $b(z) = 1- \frac{z}{2} - \frac{z}{k} - \frac{\sqrt{(\frac{2}{3} - z)^2 + 4e^{-1}(1-z)^2}}{2}$. Clearly, $g_1(x^*) \geq b(p)$.  We note 
that $$b''(z) = -\frac{2e^{-1}}{9 \left( 4e^{-1}(1-z)^2 +(\frac{3}{2}-z )^2 \right) ^{\frac{3}{2}}}<0.$$
Thus,  $b(z)$ does not have a local minima in $[\frac{1}{3}, \frac{2}{3}]$. Hence, we have that $g_1(x^*) \geq b(p) \geq \min(b(\frac{1}{3}),b( \frac{2}{3})) \geq \frac{1}{3}$.

In both cases we get $g_1(x^*)\geq\frac{1}{3}$, and as $\max(V_1, V_2)\geq g_1(x^*)$, the lemma immediately follows.
\hfill \qed
}

\subsection{Missing Details in the Proof of Theorem \ref{thm:halfsvi}}
\label{app:sec4_details}
\begin{lemma}
	\label{thm:alpha_greedy}
	Let $S^* \subseteq {\ccG}$ be a non-empty subset of items, such that $s(S^*) \leq m^*$. Also, let
	$g: 2^{\ccG} \rightarrow \Reals$
	be a submodular, non-negative and monotone function such that $g(\emptyset)=0$,  and $S=\Call{Greedy}{g,m'}$.
	Then, for any subset $T \subseteq {\ccG}$,
	there is an item $i^*\in S^*$ such that
	$g(S)+g_T(\{i^*\})\geq (1-e^{m'/m^*}) g_T(S^*)$.
\end{lemma}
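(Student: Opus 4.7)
The plan is to extend Sviridenko's classical analysis of the greedy algorithm under a knapsack constraint, which underlies Lemma~\ref{thm:svi}, from the $T=\emptyset$ case to general $T$. First, I would run $\textsc{Greedy}(g,m')$ and record the added elements in order as $a_1,\ldots,a_t$ with $S_k:=\{a_1,\ldots,a_k\}$. If $S^*\subseteq S$, then by monotonicity together with submodularity of $g$ and $g(\emptyset)=0$ we have $g(S)\ge g(S^*)\ge g_T(S^*)$, so any $i^*\in S^*$ works. Otherwise I take $i^*$ to be the first element of $S^*$ that the greedy procedure considers but rejects because of the budget test, and let $S_\ell$ denote Greedy's state at that moment, so that $s(S_\ell)+s_{i^*}>m'$ and $i^*$ is the argmax of $g_{S_\ell}(\{\cdot\})/s_{(\cdot)}$ over the still-available elements.

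The heart of the proof is the $g_T$-benchmarked recurrence
\[
g(S_k)-g(S_{k-1})\ \ge\ \frac{s_{a_k}}{m^*}\bigl(g_T(S^*)-g_T(S_{k-1})\bigr),\qquad k=1,\ldots,\ell,
\]
derived by chaining three submodular facts. First, monotonicity and subadditivity of $g_T$ give $g_T(S^*)-g_T(S_{k-1})\le\sum_{j\in S^*\setminus S_{k-1}}[g_T(S_{k-1}\cup\{j\})-g_T(S_{k-1})]$. Second, each summand equals $g_{T\cup S_{k-1}}(\{j\})$ and is at most $g_{S_{k-1}}(\{j\})$ by submodularity of $g$ applied to the containment $S_{k-1}\subseteq T\cup S_{k-1}$. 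Third, the greedy argmax rule — valid because no element of $S^*$ has yet been rejected, so every $j\in S^*\setminus S_{k-1}$ is still in the candidate pool when $a_k$ is chosen — gives $g_{S_{k-1}}(\{j\})/s_j\le g_{S_{k-1}}(\{a_k\})/s_{a_k}$. Summing and using $\sum_{j\in S^*\setminus S_{k-1}}s_j\le m^*$ produces the recurrence. The identical chain applied at the critical step, with $i^*$ in the role of $a_{\ell+1}$, yields $g(S_\ell\cup\{i^*\})-g(S_\ell)\ge(s_{i^*}/m^*)(g_T(S^*)-g_T(S_\ell))$.

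Combining the recurrence with $g_T(S_{k-1})\le g(S_{k-1})$ (itself a consequence of submodularity of $g$ and $g(\emptyset)=0$) and writing $D_k:=g_T(S^*)-g(S_k)$ gives $D_k\le D_{k-1}(1-s_{a_k}/m^*)$. Iterating through the $\ell$ actual greedy additions and the hypothetical critical addition and using $\prod_{k=1}^{\ell+1}(1-s_{a_k}/m^*)\le e^{-(s(S_\ell)+s_{i^*})/m^*}<e^{-m'/m^*}$ produces
\[
g(S_\ell\cup\{i^*\})\ \ge\ \bigl(1-e^{-m'/m^*}\bigr)\,g_T(S^*).
\]

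The final task is to extract the advertised form $g(S)+g_T(\{i^*\})\ge(1-e^{-m'/m^*})g_T(S^*)$. Monotonicity gives $g(S)\ge g(S_\ell)$ and submodularity gives $g(S_\ell\cup\{i^*\})\le g(S_\ell)+g_{S_\ell}(\{i^*\})$, so it suffices to upper-bound the singleton marginal $g_{S_\ell}(\{i^*\})$ by $g_T(\{i^*\})$. This translation from the ``$S_\ell$-conditioned'' marginal produced by the telescope to the ``$T$-conditioned'' marginal demanded by the lemma is the step I expect to be the main obstacle: submodularity of $g$ delivers it for free when $T\subseteq S_\ell$, but in general these two singleton marginals are incomparable, and the direct argument only yields the weaker bound $g_{S_\ell}(\{i^*\})\le g(\{i^*\})$. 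I would close this gap either by refining the choice of $i^*$ (for instance, by also considering the element of $S^*\setminus S$ maximizing $g_T(\{\cdot\})/s_{(\cdot)}$ and taking whichever candidate gives the stronger bound), or by redesigning the potential so that $g_T$ is carried through every step of the telescope rather than recovered only at the final bookkeeping step; most of my effort would go into pinning down this last conversion.
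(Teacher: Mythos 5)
Your skeleton is the right one---it is the same Sviridenko-style telescoping argument the paper uses---and you have correctly located the one step that does not close: converting the $S_\ell$-conditioned marginal $g_{S_\ell}(\{i^*\})$ produced by the telescope into the $T$-conditioned marginal $g_T(\{i^*\})$ demanded by the statement. This is a genuine gap, not a bookkeeping issue: the two marginals are incomparable, the fallback bound $g(\{i^*\})$ is strictly weaker, and that weaker form would not suffice where the lemma is applied (Case~2.2 of the proof of Theorem~\ref{thm:halfsvi} needs the additive loss to be $f_{P_k}(\{i_j\})\leq \frac{1}{k}f(P_k)$, a bound that holds for the $P_k$-conditioned marginal but not for the unconditioned one).

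The paper's resolution is the second option you sketch: carry $g_T$ through the benchmark from the very beginning rather than recovering it at the end. Concretely, alongside the greedy densities $\theta_r = g_{S_{r-1}}(\{i_r\})/s_{i_r}$ it defines benchmark densities $\eta_r=\max_{i\in S^*\setminus S_{r-1}} \bigl(g_T(S_{r-1}\cup\{i\})-g_T(S_{r-1})\bigr)/s_i$ using $T$-conditioned marginals, truncates the telescope at $t$, the last index before the first $r$ with $\eta_r>\theta_r$ (rather than at the first budget rejection), and takes $i^*$ to be the maximizer defining $\eta_{t+1}$. Three things then fall into place. The domination $\eta_r\leq\theta_r$ for $r\leq t$ holds by the choice of $t$, so the greedy argmax property is no longer needed to compare the benchmark against the greedy gain. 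The budget violation $s(S_t)+s_{i^*}>m'$ still follows, because $g_{S_t}(\{i^*\})\geq g_{T\cup S_t}(\{i^*\})=s_{i^*}\eta_{t+1}>s_{i^*}\theta_{t+1}$ means greedy would have selected $i^*$ at step $t+1$ had it still been admissible, so $i^*$ must have been rejected earlier for budget reasons. And the additive term contributed by the hypothetical last step is $s_{i^*}\eta_{t+1}=g_{T\cup S_t}(\{i^*\})\leq g_T(\{i^*\})$ by submodularity---exactly the quantity the lemma asks for. Your first proposed fix (choosing $i^*$ to maximize $g_T(\{\cdot\})/s$ over $S^*\setminus S$) does not work as stated, since that element need not satisfy the density comparison against the greedy choice that drives the critical step; it is the truncation at $t$ that makes the $T$-conditioned choice of $i^*$ compatible with the rest of the argument.
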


\begin{proof}
	If $S^* \subseteq S$ then the claim trivially holds. Thus, we may assume
	that $S^* \nsubseteq S$.
	Let $S = \{i_1, i_2, \ldots, i_\ell\}$, where the elements are in the order they
	were added to $S$ during the execution of {\sc Greedy}.
	Let $S_r =\{ i_1, i_2, \ldots, i_r \}$ denote the first $r$ items in $S$, $0 \leq r \leq \ell$, where $S_0=\emptyset$. We denote by 
	$\theta_r = \frac{g_{S_{r-1}} (\{i_r\})}{s_{i_r}}$ the marginal gain density from adding $\{i_r\}$ to $S_{r-1}$.
	Given a subset $T \subseteq {\ccG}$, we define 
	$\eta_r = \max_{i\in S^* \setminus S_{r-1}} \frac{g_T(S_{r-1} \cup \{i\})- g_T(S_{r-1})}{s_i}$
	for $1\leq r \leq \ell +1$
	(since $S^* \nsubseteq S$ the maximization is never over an empty set).
	By the definition of $\theta_r$ we have 
	$g(S_r) = \sum_{j=1}^{r} \theta_j s_{i_j}$ 
	for
	$1\leq r \leq \ell$.
	If there is $1\leq r \leq \ell$ for which $\eta_r >\theta_r$,
	set $t = \min\{r| 1\leq r \leq \ell, \eta_r >\theta_r\}-1$, otherwise
	set $t = \ell$.
	
	Let $i^* \in S^* \setminus S_t$ such that $\eta_{t+1} = \frac{g_T(S_t \cup \{i^*\})- g_T(S_t)}{s_{i^*}}$ (one exists by the definition of $\eta_r$). If $t<\ell$, by the
	submodularity of $g$ we have that
	$\frac{g_{S_t}(\{i^*\})}{s_i} \geq \eta_{t+1} > \theta_{t+1}$; therefore
	in the iteration of {\sc Greedy} in which $i_{t+1}$ was added to $S$, it holds that $i^* \notin E$ (otherwise
	it would have been selected instead of $i_{t+1}$), and we can conclude
	that $s(S_t)+ s_{i^*} > m'$. If $t=\ell$, we know that by the end of
	the execution of {\sc Greedy} $i^* \notin E$, therefore $s(S_t)+ s_{i^*} > m'$ as well.
	
	Define $m_r =s(S_r)$ for $0\leq r \leq t$ and  $m_{t+1} = s(S_t) + s_{i^*}$. Also,
	define $p_j = \theta_r$ for $1\leq r \leq t$ and $j = m_{r-1} +1 , m_{r-1}+2, \ldots, m_r$,
	and $p_j = \eta_{t+1}$ for $j = m_t +1, m_t +2, \ldots, m_{t+1}$.
	Clearly, by the submodularity of $g$, 
	it holds that $g_T(S_r)\leq g(S_r) = \sum_{j=1}^{m_r} p_j$, for $0 \leq r \leq t$.
	Also, it is easy to see that for any $0\leq r \leq  t$  we have $\eta_{r+1} \leq p_{m_r+1}$.
	
	The following sequence of inequalities follows from the above observation
	and the submodularity of $g$. For any $0\leq r \leq t$:
	\[
	\begin{array}{ll}
	g_T(S^*)   & \leq   g_T(S_r) + \sum_{i \in S^* \setminus S_r} \left(g_T(S_r \cup \{i\})-g_T(S_r) \right)  \\ \\
	& \leq   g_ T(S_r) + m^* \eta_{r+1}  \leq \sum_{j=1}^{m_r} p_j + m^* p_{m_r +1}
	\end{array}
	\]
	Therefore,
	\[
	\begin{array}{l}
	g_T(S^*) \leq \min_{1\leq r \leq t }\left\{\sum_{j=1}^{m_r} p_j + m^* p_{m_r +1}\right\}=
	\min_{1\leq s \leq m_{t+1} } \left\{ \sum_{j=1}^{s-1} p_j + m^* p_s\right\}
	\end{array}
	\]
	
	We now use inequality (4) in~\cite{S04}:
	
	\[
	\begin{array}{ll}
	\frac{g(S_t) + g_T(\{i^*\})}{g_T(S^*)}
	& \geq
	\frac{ \sum_{s=1}^{m_{t+1}} p_j } { \min_{1\leq s \leq m_{t+1} } \left\{ \sum_{j=1}^{s-1} p_j + m^* p_s\right\}} \\  \\
	& \geq
	(1-e^{m_{t+1}/m^*}) \geq (1-e^{m'/m^*}),
	\end{array}
	\]
	
	and the lemma follows immediately. 
\end{proof}

\remove{
\begin{abstract}

Lorem ipsum dolor sit amet, consectetur adipiscing elit. Praesent convallis orci arcu, eu mollis dolor. Aliquam eleifend suscipit lacinia. Maecenas quam mi, porta ut lacinia sed, convallis ac dui. Lorem ipsum dolor sit amet, consectetur adipiscing elit. Suspendisse potenti. 
\end{abstract}

\section{Typesetting instructions -- Summary}
\label{sec:typesetting-summary}

LIPIcs is a series of open access high-quality conference proceedings across all fields in informatics established in cooperation with Schloss Dagstuhl. 
In order to do justice to the high scientific quality of the conferences that publish their proceedings in the LIPIcs series, which is ensured by the thorough review process of the respective events, we believe that LIPIcs proceedings must have an attractive and consistent layout matching the standard of the series.
Moreover, the quality of the metadata, the typesetting and the layout must also meet the requirements of other external parties such as indexing service, DOI registry, funding agencies, among others. The guidelines contained in this document serve as the baseline for the authors, editors, and the publisher to create documents that meet as many different requirements as possible. 

Please comply with the following instructions when preparing your article for a LIPIcs proceedings volume. 
\paragraph*{Minimum requirements}

\begin{itemize}
\item Use pdflatex and an up-to-date \LaTeX{} system.
\item Use further \LaTeX{} packages and custom made macros carefully and only if required.
\item Use the provided sectioning macros: \verb+\section+, \verb+\subsection+, \verb+\subsubsection+, \linebreak \verb+\paragraph+, \verb+\paragraph*+, and \verb+\subparagraph*+.
\item Provide suitable graphics of at least 300dpi (preferably in PDF format).
\item Use BibTeX and keep the standard style (\verb+plainurl+) for the bibliography.
\item Please try to keep the warnings log as small as possible. Avoid overfull \verb+\hboxes+ and any kind of warnings/errors with the referenced BibTeX entries.
\item Use a spellchecker to correct typos.
\end{itemize}

\paragraph*{Mandatory metadata macros}
Please set the values of the metadata macros carefully since the information parsed from these macros will be passed to publication servers, catalogues and search engines.
Avoid placing macros inside the metadata macros. The following metadata macros/environments are mandatory:
\begin{itemize}
\item \verb+\title+ and, in case of long titles, \verb+\titlerunning+.
\item \verb+\author+, one for each author, even if two or more authors have the same affiliation.
\item \verb+\authorrunning+ and \verb+\Copyright+ (concatenated author names)\\
The \verb+\author+ macros and the \verb+\Copyright+ macro should contain full author names (especially with regard to the first name), while \verb+\authorrunning+ should contain abbreviated first names.
\item \verb+\ccsdesc+ (ACM classification, see \url{https://www.acm.org/publications/class-2012}).
\item \verb+\keywords+ (a comma-separated list of keywords).
\item \verb+\relatedversion+ (if there is a related version, typically the ``full version''); please make sure to provide a persistent URL, e.\,g., at arXiv.
\item \verb+\begin{abstract}...\end{abstract}+ .
\end{itemize}

\paragraph*{Please do not \ldots} 
Generally speaking, please do not override the \texttt{lipics-v2019}-style defaults. To be more specific, a short checklist also used by Dagstuhl Publishing during the final typesetting is given below.
In case of \textbf{non-compliance} with these rules Dagstuhl Publishing will remove the corresponding parts of \LaTeX{} code and \textbf{replace it with the \texttt{lipics-v2019} defaults}. In serious cases, we may reject the LaTeX-source and expect the corresponding author to revise the relevant parts.
\begin{itemize}
\item Do not use a different main font. (For example, the \texttt{times} package is forbidden.)
\item Do not alter the spacing of the \texttt{lipics-v2019.cls} style file.
\item Do not use \verb+enumitem+ and \verb+paralist+. (The \texttt{enumerate} package is preloaded, so you can use
 \verb+\begin{enumerate}[(a)]+ or the like.)
\item Do not use ``self-made'' sectioning commands (e.\,g., \verb+\noindent{\bf My+ \verb+Paragraph}+).
\item Do not hide large text blocks using comments or \verb+\iffalse+ $\ldots$ \verb+\fi+ constructions. 
\item Do not use conditional structures to include/exclude content. Instead, please provide only the content that should be published -- in one file -- and nothing else.
\item Do not wrap figures and tables with text. In particular, the package \texttt{wrapfig} is not supported.
\item Do not change the bibliography style. In particular, do not use author-year citations. (The
\texttt{natbib} package is not supported.)
\end{itemize}

\enlargethispage{\baselineskip}

This is only a summary containing the most relevant details. Please read the complete document ``LIPIcs: Instructions for Authors and the \texttt{lipics-v2019} Class'' for all details and don't hesitate to contact Dagstuhl Publishing (\url{mailto:publishing@dagstuhl.de}) in case of questions or comments:
\href{http://drops.dagstuhl.de/styles/lipics-v2019/lipics-v2019-authors/lipics-v2019-authors-guidelines.pdf}{\texttt{http://drops.dagstuhl.de/styles/lipics-v2019/\newline lipics-v2019-authors/lipics-v2019-authors-guidelines.pdf}}

\section{Lorem ipsum dolor sit amet}

Lorem ipsum dolor sit amet, consectetur adipiscing elit \cite{DBLP:journals/cacm/Knuth74}. Praesent convallis orci arcu, eu mollis dolor. Aliquam eleifend suscipit lacinia. Maecenas quam mi, porta ut lacinia sed, convallis ac dui. Lorem ipsum dolor sit amet, consectetur adipiscing elit. Suspendisse potenti. Donec eget odio et magna ullamcorper vehicula ut vitae libero. Maecenas lectus nulla, auctor nec varius ac, ultricies et turpis. Pellentesque id ante erat. In hac habitasse platea dictumst. Curabitur a scelerisque odio. Pellentesque elit risus, posuere quis elementum at, pellentesque ut diam. Quisque aliquam libero id mi imperdiet quis convallis turpis eleifend. 

\begin{lemma}[Lorem ipsum]
\label{lemma:lorem}
Vestibulum sodales dolor et dui cursus iaculis. Nullam ullamcorper purus vel turpis lobortis eu tempus lorem semper. Proin facilisis gravida rutrum. Etiam sed sollicitudin lorem. Proin pellentesque risus at elit hendrerit pharetra. Integer at turpis varius libero rhoncus fermentum vitae vitae metus.
\end{lemma}

\begin{proof}
Cras purus lorem, pulvinar et fermentum sagittis, suscipit quis magna.

\begin{claim}
content...
\end{claim}
\begin{claimproof}
content...
\end{claimproof}

\end{proof}

\begin{corollary}[Curabitur pulvinar, \cite{DBLP:books/mk/GrayR93}]
\label{lemma:curabitur}
Nam liber tempor cum soluta nobis eleifend option congue nihil imperdiet doming id quod mazim placerat facer possim assum. Lorem ipsum dolor sit amet, consectetuer adipiscing elit, sed diam nonummy nibh euismod tincidunt ut laoreet dolore magna aliquam erat volutpat.
\end{corollary}

\begin{proposition}\label{prop1}
This is a proposition
\end{proposition}

\autoref{prop1} and \cref{prop1} \ldots

\subsection{Curabitur dictum felis id sapien}

Curabitur dictum \cref{lemma:curabitur} felis id sapien \autoref{lemma:curabitur} mollis ut venenatis tortor feugiat. Curabitur sed velit diam. Integer aliquam, nunc ac egestas lacinia, nibh est vehicula nibh, ac auctor velit tellus non arcu. Vestibulum lacinia ipsum vitae nisi ultrices eget gravida turpis laoreet. Duis rutrum dapibus ornare. Nulla vehicula vulputate iaculis. Proin a consequat neque. Donec ut rutrum urna. Morbi scelerisque turpis sed elit sagittis eu scelerisque quam condimentum. Pellentesque habitant morbi tristique senectus et netus et malesuada fames ac turpis egestas. Aenean nec faucibus leo. Cras ut nisl odio, non tincidunt lorem. Integer purus ligula, venenatis et convallis lacinia, scelerisque at erat. Fusce risus libero, convallis at fermentum in, dignissim sed sem. Ut dapibus orci vitae nisl viverra nec adipiscing tortor condimentum \cite{DBLP:journals/cacm/Dijkstra68a}. Donec non suscipit lorem. Nam sit amet enim vitae nisl accumsan pretium. 

\begin{lstlisting}[caption={Useless code},label=list:8-6,captionpos=t,float,abovecaptionskip=-\medskipamount]
for i:=maxint to 0 do 
begin 
    j:=square(root(i));
end;
\end{lstlisting}

\subsection{Proin ac fermentum augue}

Proin ac fermentum augue. Nullam bibendum enim sollicitudin tellus egestas lacinia euismod orci mollis. Nulla facilisi. Vivamus volutpat venenatis sapien, vitae feugiat arcu fringilla ac. Mauris sapien tortor, sagittis eget auctor at, vulputate pharetra magna. Sed congue, dui nec vulputate convallis, sem nunc adipiscing dui, vel venenatis mauris sem in dui. Praesent a pretium quam. Mauris non mauris sit amet eros rutrum aliquam id ut sapien. Nulla aliquet fringilla sagittis. Pellentesque eu metus posuere nunc tincidunt dignissim in tempor dolor. Nulla cursus aliquet enim. Cras sapien risus, accumsan eu cursus ut, commodo vel velit. Praesent aliquet consectetur ligula, vitae iaculis ligula interdum vel. Integer faucibus faucibus felis. 

\begin{itemize}
\item Ut vitae diam augue. 
\item Integer lacus ante, pellentesque sed sollicitudin et, pulvinar adipiscing sem. 
\item Maecenas facilisis, leo quis tincidunt egestas, magna ipsum condimentum orci, vitae facilisis nibh turpis et elit. 
\end{itemize}

\begin{remark}
content...
\end{remark}

\section{Pellentesque quis tortor}

Nec urna malesuada sollicitudin. Nulla facilisi. Vivamus aliquam tempus ligula eget ornare. Praesent eget magna ut turpis mattis cursus. Aliquam vel condimentum orci. Nunc congue, libero in gravida convallis \cite{DBLP:conf/focs/HopcroftPV75}, orci nibh sodales quam, id egestas felis mi nec nisi. Suspendisse tincidunt, est ac vestibulum posuere, justo odio bibendum urna, rutrum bibendum dolor sem nec tellus. 

\begin{lemma} [Quisque blandit tempus nunc]
Sed interdum nisl pretium non. Mauris sodales consequat risus vel consectetur. Aliquam erat volutpat. Nunc sed sapien ligula. Proin faucibus sapien luctus nisl feugiat convallis faucibus elit cursus. Nunc vestibulum nunc ac massa pretium pharetra. Nulla facilisis turpis id augue venenatis blandit. Cum sociis natoque penatibus et magnis dis parturient montes, nascetur ridiculus mus.
\end{lemma}

Fusce eu leo nisi. Cras eget orci neque, eleifend dapibus felis. Duis et leo dui. Nam vulputate, velit et laoreet porttitor, quam arcu facilisis dui, sed malesuada risus massa sit amet neque.

\appendix
\section{Morbi eros magna}

Morbi eros magna, vestibulum non posuere non, porta eu quam. Maecenas vitae orci risus, eget imperdiet mauris. Donec massa mauris, pellentesque vel lobortis eu, molestie ac turpis. Sed condimentum convallis dolor, a dignissim est ultrices eu. Donec consectetur volutpat eros, et ornare dui ultricies id. Vivamus eu augue eget dolor euismod ultrices et sit amet nisi. Vivamus malesuada leo ac leo ullamcorper tempor. Donec justo mi, tempor vitae aliquet non, faucibus eu lacus. Donec dictum gravida neque, non porta turpis imperdiet eget. Curabitur quis euismod ligula.



\bibliography{lipics-v2019-sample-article}

\appendix

\section{Styles of lists, enumerations, and descriptions}\label{sec:itemStyles}

List of different predefined enumeration styles:

\begin{itemize}
\item \verb|\begin{itemize}...\end{itemize}|
\item \dots
\item \dots
\end{itemize}

\begin{enumerate}
\item \verb|\begin{enumerate}...\end{enumerate}|
\item \dots
\item \dots
\end{enumerate}

\begin{alphaenumerate}
\item \verb|\begin{alphaenumerate}...\end{alphaenumerate}|
\item \dots
\item \dots
\end{alphaenumerate}

\begin{romanenumerate}
\item \verb|\begin{romanenumerate}...\end{romanenumerate}|
\item \dots
\item \dots
\end{romanenumerate}

\begin{bracketenumerate}
\item \verb|\begin{bracketenumerate}...\end{bracketenumerate}|
\item \dots
\item \dots
\end{bracketenumerate}

\begin{description}
\item[Description 1] \verb|\begin{description} \item[Description 1]  ...\end{description}|
\item[Description 2] Fusce eu leo nisi. Cras eget orci neque, eleifend dapibus felis. Duis et leo dui. Nam vulputate, velit et laoreet porttitor, quam arcu facilisis dui, sed malesuada risus massa sit amet neque.
\item[Description 3]  \dots
\end{description}

\section{Theorem-like environments}\label{sec:theorem-environments}

List of different predefined enumeration styles:

\begin{theorem}\label{testenv-theorem}
Fusce eu leo nisi. Cras eget orci neque, eleifend dapibus felis. Duis et leo dui. Nam vulputate, velit et laoreet porttitor, quam arcu facilisis dui, sed malesuada risus massa sit amet neque.
\end{theorem}

\begin{lemma}\label{testenv-lemma}
Fusce eu leo nisi. Cras eget orci neque, eleifend dapibus felis. Duis et leo dui. Nam vulputate, velit et laoreet porttitor, quam arcu facilisis dui, sed malesuada risus massa sit amet neque.
\end{lemma}

\begin{corollary}\label{testenv-corollary}
Fusce eu leo nisi. Cras eget orci neque, eleifend dapibus felis. Duis et leo dui. Nam vulputate, velit et laoreet porttitor, quam arcu facilisis dui, sed malesuada risus massa sit amet neque.
\end{corollary}

\begin{proposition}\label{testenv-proposition}
Fusce eu leo nisi. Cras eget orci neque, eleifend dapibus felis. Duis et leo dui. Nam vulputate, velit et laoreet porttitor, quam arcu facilisis dui, sed malesuada risus massa sit amet neque.
\end{proposition}

\begin{exercise}\label{testenv-exercise}
Fusce eu leo nisi. Cras eget orci neque, eleifend dapibus felis. Duis et leo dui. Nam vulputate, velit et laoreet porttitor, quam arcu facilisis dui, sed malesuada risus massa sit amet neque.
\end{exercise}

\begin{definition}\label{testenv-definition}
Fusce eu leo nisi. Cras eget orci neque, eleifend dapibus felis. Duis et leo dui. Nam vulputate, velit et laoreet porttitor, quam arcu facilisis dui, sed malesuada risus massa sit amet neque.
\end{definition}

\begin{example}\label{testenv-example}
Fusce eu leo nisi. Cras eget orci neque, eleifend dapibus felis. Duis et leo dui. Nam vulputate, velit et laoreet porttitor, quam arcu facilisis dui, sed malesuada risus massa sit amet neque.
\end{example}

\begin{note}\label{testenv-note}
Fusce eu leo nisi. Cras eget orci neque, eleifend dapibus felis. Duis et leo dui. Nam vulputate, velit et laoreet porttitor, quam arcu facilisis dui, sed malesuada risus massa sit amet neque.
\end{note}

\begin{note*}
Fusce eu leo nisi. Cras eget orci neque, eleifend dapibus felis. Duis et leo dui. Nam vulputate, velit et laoreet porttitor, quam arcu facilisis dui, sed malesuada risus massa sit amet neque.
\end{note*}

\begin{remark}\label{testenv-remark}
Fusce eu leo nisi. Cras eget orci neque, eleifend dapibus felis. Duis et leo dui. Nam vulputate, velit et laoreet porttitor, quam arcu facilisis dui, sed malesuada risus massa sit amet neque.
\end{remark}

\begin{remark*}
Fusce eu leo nisi. Cras eget orci neque, eleifend dapibus felis. Duis et leo dui. Nam vulputate, velit et laoreet porttitor, quam arcu facilisis dui, sed malesuada risus massa sit amet neque.
\end{remark*}

\begin{claim}\label{testenv-claim}
Fusce eu leo nisi. Cras eget orci neque, eleifend dapibus felis. Duis et leo dui. Nam vulputate, velit et laoreet porttitor, quam arcu facilisis dui, sed malesuada risus massa sit amet neque.
\end{claim}

\begin{claim*}\label{testenv-claim2}
Fusce eu leo nisi. Cras eget orci neque, eleifend dapibus felis. Duis et leo dui. Nam vulputate, velit et laoreet porttitor, quam arcu facilisis dui, sed malesuada risus massa sit amet neque.
\end{claim*}

\begin{proof}
Fusce eu leo nisi. Cras eget orci neque, eleifend dapibus felis. Duis et leo dui. Nam vulputate, velit et laoreet porttitor, quam arcu facilisis dui, sed malesuada risus massa sit amet neque.
\end{proof}

\begin{claimproof}
Fusce eu leo nisi. Cras eget orci neque, eleifend dapibus felis. Duis et leo dui. Nam vulputate, velit et laoreet porttitor, quam arcu facilisis dui, sed malesuada risus massa sit amet neque.
\end{claimproof}
}
\end{document}